\newtheorem{theorem}{Theorem}[section]
\numberwithin{equation}{section}
\newcommand{\beginsupplement}{%
        \setcounter{section}{0}%
        \setcounter{table}{0}
        \renewcommand{\thetable}{S\arabic{table}}%
        \setcounter{figure}{0}
        \renewcommand{\thefigure}{S\arabic{figure}}%
        \setcounter{algorithm}{0}
        \renewcommand{\thealgorithm}{S\arabic{algorithm}}%
        \setcounter{equation}{0}
        \renewcommand{\theequation}{S\thesection.\arabic{equation}}%
     }
\theoremstyle{definition}
\newtheorem{exmp}{Example}[section]
\definecolor{ao}{rgb}{0.0, 0.5, 0.0}
\title{A multi-arm multi-stage design for trials with all pairwise testing}
\author{Peter Greenstreet$^{1,2}$, Thomas Jaki$^{3,4}$, Alun Bedding$^{5}$, Pavel Mozgunov$^{3}$
\\ \footnotesize
$^1$ Ottawa Methods Centre, Ottawa Hospital Research Institute, Ottawa, Canada \\ \footnotesize
$^2$ Department of Mathematics and Statistics, Lancaster University, Lancaster, UK \\ \footnotesize
$^3$ MRC Biostatistics Unit, University of Cambridge, Cambridge, UK \\ \footnotesize
$^4$ University of Regensburg, Regensburg, Germany \\ \footnotesize
$^5$ Alun Bedding Coaching \& Consulting Ltd, Bury St Edmunds, UK
}
\date{}
\begin{document}
\maketitle
\begin{abstract}
Multi-arm multi-stage (MAMS) trials have gained popularity to enhance the efficiency of clinical trials, potentially reducing both duration and costs. This paper focuses on designing MAMS trials where no control treatment exists. This can arise when multiple standard treatments are already established or no treatment is available for a severe disease, making it unethical to withhold a potentially helpful option. The proposed design incorporates interim analyses to allow early termination of notably worst treatments and stops the trial entirely if all remaining treatments are performing similarly. The proposed design controls the familywise error rate (FWER) for all pairwise comparisons and provides the conditions guaranteeing FWER control in the strong sense. The FWER and power are used to calculate both the stopping boundaries and the sample size required. Analytic solutions to compute the expected sample size are also derived. A trial motivated by a study conducted in sepsis, where there was no control treatment, is shown. The multi-arm multi-stage all pairwise (MAMSAP) design proposed here is compared to multiple different approaches. For the trial studied, the proposed method yields the lowest required maximum and expected sample size when controlling the FWER and power at the desired levels.  
\end{abstract}
\section{Introduction}
Multi-arm multi-stage trials have become increasingly popular due to their potential to reduce the duration and cost of clinical trials \citep{StallardNigel2020EADf, lee2021}. Multi-arm studies can have multiple potential benefits including: shared trial infrastructure and control arm; 
less administrative and logistical effort than setting up separate trials and enhanced recruitment \citep{BurnettThomas2020Aeta, MeurerWilliamJ2012ACTA}. Interim analyses can greatly improve the efficiency of a clinical trial and help avoid unnecessary exposure of participants to ineffective or harmful treatments, while also conserving patients that could be redirected to more promising treatments \citep{PocockStuartJ.1977GSMi,WasonJames2016Srfm}. This results in useful therapies potentially being identified faster while reducing cost and time \citep{cohen2015adding}. 
Traditionally multi-arm multi-stage (MAMS) trials involve comparing the active treatments to a \textit{common control} treatment at predefined interim stages \citep{WasonJamesM.S.2012Odom, RoystonPatrick2003Ndfm, greenstreet2023change}. \cite{MagirrD.2012AgDt} extended the multi-arm setting with common control treatment of \cite{DunnettCharlesW1955AMCP} to allow for a MAMS design in which the type I error of the entire trial is controlled.
\par 
In this manuscript we will focus on designing multi-arm multi-stage trials where no control treatment is available. Specifically, we are extending the work of \cite{TukeyJohnW.1949CIMi} to allow for interim analyses while still controlling the type I error of the entire trial. There are several applied settings where control treatments are absent, for instance when multiple treatments are already established as the standard of care for a condition and the objective of the trial is to determine if any treatment(s) is/are superior or inferior to any of the others \citep{BriffaTom2021Ncet}. Such investigations are particularly important, as in many medical specialties, less than 20\% of recommendations in contemporary clinical practice guidelines are supported by high quality evidence \citep{ CaliffRobertM2016TEGt}. 
Another situation where such trials are useful is where no treatment exists for a specific severe disease in a given population exists so that it would be unethical to give patients a placebo and thereby withholding a potentially beneficial treatment. There may be no treatment currently used due to a lack of resources to use the accepted standard of care, or if it is an emerging infectious disease, so no standard of care has been established \cite{whitehead2016trial, MagaretAmalia2016Doam}.
\par
\cite{MagaretAmalia2016Doam} propose an approach for how one can conduct all pairwise comparisons for a multi-arm study with no control treatment in sepsis where the trial has interim analyses. This trial was motivated by the Ebola outbreak \citep{whitehead2016trial}. When conducting pairwise comparisons, all the null hypotheses, that two treatments are equal, are tested for every pair of treatments within the multi-arm study. In this proposal, a treatment is dropped, at an interim analysis, if it is found to be statistically significantly worse then at least one other treatment in the trial and if all remaining treatments are found to be similar, then the trial stops. 
 In \cite{MagaretAmalia2016Doam} the calculations of the rules to drop treatments or stop the trial early were done using a simulation based approach which did not guarantee the type I error of the entire trial. This work was then considered in \cite{WhiteheadJohn2020Eote} which proposed a different design based on using the double triangular stopping rules \citep{WhiteheadJ.1997TDaA, whitehead1990double} 
 to define when treatments would stop in the trial. In \cite{WhiteheadJohn2020Eote} the boundaries were set to control the type I error for \textit{each pairwise comparison} and not adjusted to account for the multiplicity of the design. Therefore control of the power and overall type I error of the trial were not guaranteed. 
\par
An alternative to conducting an all pairwise approach is to use a screened selection design such as the one discussed in \cite{WuJianrong2022Tssd}. In this design there is no control treatment and the treatments are ranked against each other and decisions are made based on a drop the loser design or pick-the-winner design \citep{HillsRobertK.2011AoaP}. 
For this type of design it is not possible to control the probability of wrongly declaring one treatment better than another when in fact they have equal treatment effect. Consequently \cite{WuJianrong2022Tssd} propose using such a design for phase II screening. Therefore it is less applicable for late phase trials which are the focus of this work.  
\par
In this work, all pairwise comparisons are made to compare the multiple treatment arms to one another and interim analyses allow for early termination of treatments found to be inferior to others and can lead to the early termination of the entire trial if all remaining treatments are deemed similar. We focus on guaranteeing family-wise error rate (FWER) control, where FWER is the probability of rejecting any true null hypotheses across the entire trial. The FWER is considered a robust and strong type of error control in multi-arm trials \citep{WasonJames2016Srfm} and in certain scenarios, it is recommended or even required by regulatory authorities \citep{WasonJamesMS2014Cfmi}.  This work also presents an analytical approach to finding the required sample size which guarantees the desirable statistical power.
\par
The upcoming section will formally introduce the motivating example and give its key characteristics. This motivating example is then used throughout the following methodology section to introduce the proposed multi-arm multi-stage all pairwise (MAMSAP) design. In Section \ref{Sec:Methodology}, the FWER is formally defined and design consideration for FWER control in the strong sense are given, along with the methodology for calculating power of the trial. 
The design for the motivating example using the MAMSAP design is presented in Section \ref{Sec:Motivating} and is compared to alternative approaches before the paper concludes with a discussion.

\section{Motivating example}
\label{Sec:MotivatingExample}
We are motivated by the design for a trial in sepsis as discussed in \cite{MagaretAmalia2016Doam} and \cite{WhiteheadJohn2020Eote}. Guidelines exist on how to treat patients with sepsis \citep{DunserMartin2012Rfsm}, however there is no current standard of care treatment, so a multi-arm all pairwise trial was suggested due to the high mortality rate of over 55\% with current practice \citep{whitehead2016trial}. In both \cite{MagaretAmalia2016Doam} and \cite{WhiteheadJohn2020Eote} the binary outcome of mortality of patients after 28 days is used as the primary endpoint.
\par
%
Motivated by this trial, we focus on a trial with 4 arms and 3 stages per treatment arm with equal numbers of patients per treatment per stage.  We use the same trial configuration as discussed in \cite{WhiteheadJohn2020Eote} of a normal approximation of the binary treatment effect difference with a clinically relevant effect ($\theta'$) of $\log(1.5)$, which equals a 50\% improvement in survival.  

\section{Methodology}
\label{Sec:Methodology}

\subsection{Setting}
\label{Sec:Setting}
Let $K$ be the number of treatments for the trial with the primary outcome measured for each patient being assumed to be independent. The treatment effect of the $K$ experimental treatments are  $\psi_1, \hdots, \psi_K$. 
Let $H_{k,k^\star}$ define the null hypothesis for treatment $k$ with treatment $k^\star$, where $k \neq k^\star$ and $k,k^\star = 1,\hdots, K$.    
The set of null hypotheses for an all pairwise comparison trial are
\begin{align*}
H_{1,2}: \psi_1=\psi_2,\hdots, & H_{1,K}:\psi_1=\psi_K, \;  H_{2,3}:\psi_2=\psi_3, \hdots, H_{K-1,K}:\psi_{K-1}=\psi_{K}.
\end{align*}
The number of null hypotheses equals $\eta=\sum^{K-1}_{k=1} k = {K \choose 2}$ 
The global null hypothesis corresponds to $\psi_1=\psi_2=\hdots=\psi_K$.
\par
The null hypotheses are tested at a maximum of $J$ stages with there being a maximum of $J-1$ interim analyses, with an analysis taking place at the end of each stage. At each analysis the null hypothesis for each pairwise comparison in the trial is tested until at least one of the treatments in that hypothesis being tested has stopped, then the corresponding null hypothesis is not tested for the remaining trial.  
The null hypothesis at stage $j$ for treatment $k$ and $k^\star$ is tested using the test statistics 
\begin{equation*}
Z_{(k,k^\star),j} =\frac{\bar{\psi}_{k,j}- \bar{\psi}_{k^\star,j}}{\sqrt{V_{(k,k^\star),j}}},
\end{equation*}
where $\bar{\psi}_{k,j}$ and $\bar{\psi}_{k^\star,j}$ are the treatment effects of the observed patients on that given treatments $k,k^\star$, up to stage $j = 1, \hdots, J$ and $V_{(k,k^\star),j}$ is the variance of the observed difference in treatment effects. For the motivating example  $V_{(k,k^\star),j}$ is given in Example \ref{exmp1}. It is assumed that $Z_{(k,k^\star),j}$ follows a normal distribution $Z_{(k,k^\star),j} \sim (\frac{\bar{\psi}_{k,j}- \bar{\psi}_{k^\star,j}}{\sqrt{V_{(k,k^\star),j}}},1)$.
Note that these are the same test statistics as used for the Tukey test \citep{TukeyJohnW.1949CIMi, KramerClydeYoung1956EoMR}. 
\par
The decision-making for the trial is made using \textit{outer} upper and lower stopping boundaries and \textit{inner} upper and lower stopping boundaries.
The outer boundaries are used first at a given stage to test if there is a statistically significant difference between two treatments, so if there is then the inferior treatment is dropped from the trial. The outer upper boundaries are denoted as $U=(u_{1},\hdots,u_{J})$ and the outer lower boundaries are denoted as $L=(-u_{1},\hdots,-u_{J})$, where $u_j$ is the upper outer boundary at stage $j$, $j=1,\hdots,J$. The outer upper and lower boundaries are symmetric as significant differences in both directions are equally important. If $Z_{(k,k^\star),j}>u_j$ then treatment $k$ is declared superior to treatment $k^\star$ and treatment $k^\star$ is dropped from the trial. If $Z_{(k,k^\star),j}<-u_j$ then treatment $k^\star$ is declared superior to treatment $k$ and treatment $k$ is dropped from the trial.
\par
The inner boundaries are then used, at a given stage to test if all the remaining treatments are similar enough to stop the trial early. The inner upper boundaries and inner lower boundaries are also symmetric and denoted as $U^\star=(u^\star_{1},\hdots,u^\star_{J})$ and $L^\star=(-u^\star_{1},\hdots,-u^\star_{J})$, respectively, where $u^\star_j$ is the upper inner boundary at stage $j$. For stages where one is not testing if the remaining treatments are similar enough to stop the trial early then $u^\star_j=0$. For the inner upper and lower boundaries if  $-u^\star_j < Z_{(k,k^\star),j}< u^\star_j$ for all treatments that have not been dropped by stage j, then the trial stops with the conclusion that the remaining treatments are similar. If at least 2 treatments, $k,k^\star$, exist that have not been dropped by stage $j$ and $-u_j<Z_{(k,k^\star),j}<-u^\star_j$ or $u^\star_j<Z_{(k,k^\star),j}<u_j$ then all treatments that have not been dropped by stage $j$ continue to the next stage.

\par
In this work both binding and non-binding boundaries will be considered when calculating the FWER. In the context of this design binding rules require trial termination if all treatments are found to be similar at a given stage, while non-binding rules grant the trial team the flexibility to decide whether to continue or stop the trial \citep{li2020optimality, BretzFrank2009Adfc}. Binding and non-binding boundaries both require that a treatment is dropped if it is found inferior to another treatment. In other words, the outer bounds are always binding while the inner bounds will be considered to be binding or non-binding. 
\begin{exmp}
\label{exmp1}
For the motivating example using the normal approximation, the variance of the observed difference in treatment effect equals $V_{(k,k^\star),j}=(n_{k,j}^{-1}+ n_{k^\star,j}^{-1})^{-1}$ where $n_{k,j}$ denotes the number of patients recruited to treatment $k$ by the end of stage $j$. Similarly we define $r_{k,j}$ as the ratio of patients recruited to treatment $k$ by the end of stage $j$ compared to the number of patients recruited to treatment 1 by the end of stage 1, so that $r_{1,1}=1$. The realized sample size of a trial is denoted by $N$ with the maximum planned sample size being $\max(N)= \sum_{k=1}^K n_{k,J}$ where $K$ is the number of treatments in the trial and where $J$ is the maximum number of analyses for the trial.
\par
Based on the motivating example the boundary shape when using double triangular boundaries \citep{WhiteheadJ.1997TDaA} are given in Figure~\ref{fig:Triplot} to control the FWER at 5\% of the trial for binding boundaries. Shown in this figure are both the outer and inner boundaries, as well as the different areas for which each test statistic could fall. The horizontal lines represent the area that one would reject the null hypothesis. The solid area being where the hypothesis is unable to be rejected but the hypothesis will continue being studied. If all remaining test statistics are in the vertical lined area then the trial stops for all remaining treatments being similar. 

\begin{figure}[h]
  \centering
  \includegraphics[width=.70\linewidth,trim= 0 0.5cm 0 2cm, clip]{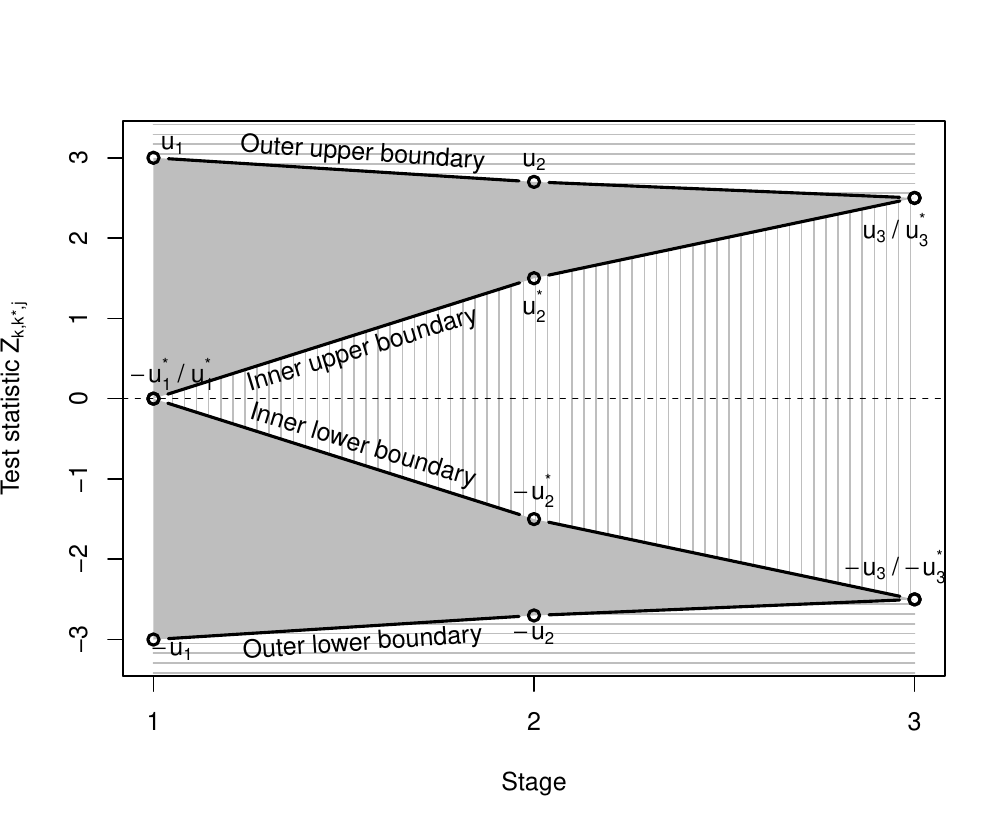}  

\caption{ \small The boundary shape when using the binding double triangular boundaries for the 3 stage motivating example.} 
\label{fig:Triplot}
\end{figure} 
\hfill $\blacksquare$
\end{exmp}
\subsection{Familywise error rate (FWER)}
\label{Sec:FWER}
In an all pairwise trial the type I error for each comparison is the probability that the null hypothesis for that comparison is wrongly rejected at any stage during the trial, when the null hypothesis is true. The FWER is the probability of making at least one type I error across all the comparisons at any stage of the trial. The FWER in the strong sense is defined as:
\small
\begin{equation*}
P(\text{reject at least one true } H_{k,k^\star} \text{  under any null configuation}, k,k^\star=1,\hdots, K \text{  given } k \neq k^\star) \leq \alpha,
\end{equation*}
\normalsize
where $\alpha$ is the desired level of control. Control of the FWER in the strong sense means that the FWER is controlled under any null configuration of treatment effects. Control of the FWER in the weak sense means that the FWER is only guaranteed to be controlled under the global null configuration \citep{WasonJamesMS2014Cfmi}.  
To calculate FWER we define the following events
$b_{(k,k^\star),j}=\{-u_{j} <Z_{(k,k^\star),j}< u_{j}\}$, $c_{(k,k^\star),j}=\{-u^\star_{j} <Z_{(k,k^\star),j}< u^\star_{j}\},$
where $b_{(k,k^\star),j}$ is the event that the test statistic testing treatment $k$ against treatment $k^\star$ is within the outer boundaries at stage $j$ and $c_{(k,k^\star),j}$ is the event the test statistic is within the inner boundaries at stage $j$.
\par
We define $T_{\beta,j}$ as the set of indices of true null hypotheses being tested at stage $j = 1,\hdots, J$, where $\beta$ is used to denote that these are the null hypotheses being tested at the start of a given stage. Therefore $T_{\beta,j}$ is given before any treatments are dropped for inferiority at given stage $j$. We define $T_{\gamma,j}$ as the set of indices of hypotheses being tested after dropping any treatments found to be inferior to any other treatment by the end of stage $j$, so $\gamma$ is used to denote that these are the null hypotheses after dropping treatments at stage $j$. Therefore $T_{\gamma,j}$ is given after any treatments are dropped at stage $j$ but also includes any remaining null hypotheses even if they are not true null hypotheses as all remaining test statistics need to be in the inner boundaries for the trial to stop.  We define at the final stage $J$ that $T_{\gamma,J}=T_{\beta,J}$ as at the final stage, only the true null hypotheses being tested will effect the type I error as the trial will end at this given stage.
We denote the family of sets of $T_{\beta,j}$, for every $j=1,...,J$, as $\mathbf{T_\beta}=\{T_{\beta,1} ,\hdots, T_{\beta,J} \}$ and similarly denote the family of sets of $T_{\gamma,j}$, for every $j=1,...,J$, as  $\mathbf{T_\gamma}=\{T_{\gamma,1} ,\hdots, T_{\gamma,J} \}$. In addition we define $C_{T_{\gamma,j},j}$ as the event that all the test statistics for stage $j$ in the set indexed in $T_{\gamma,j}$ are within the inner boundaries and $B_{T_\beta,j}$ as the event that all the test statistics for stage $j$ in the set indexed in $T_{\beta,j}$ are within the outer boundaries, so,
$
B_{T_{\beta,j},j}=\bigcap_{h \in T_{\beta,j} } b_{(h),j}$ and
$C_{T_{\gamma,j},j}=\bigcap_{h \in T_{\gamma,j} } c_{(h),j}$.

\par
Additionally we define $D_{T_{\beta,j},T_{\gamma,j},j}=B_{T_{\beta,j},j} / C_{T_{\gamma,j},j}$, so $D_{T_{\beta,j},T_{\gamma,j},j}$ are the events that all the test statistics testing the true null hypotheses are within the outer boundaries, but at least one of the test statistics still being tested, at the end of stage $j$, is outside the inner boundaries.  
\begin{exmp}
For the motivating example assume that at the beginning of testing at stage 2 treatments 1, 2 and 3 are still being tested and $\psi_1=\psi_2 \neq \psi_3$. Then $T_{\beta,2}=\{(1,2)\}$. If no treatments are found inferior to any other treatments at this stage, i.e all test statistics are within the outer boundaries, then $T_{\gamma,2}=\{(1,2),(1,3),(2,3)\}$. Also $C_{T_{\gamma,2},2}= c_{(1,2),2} \cap c_{(1,3),2} \cap c_{(2,3),2}$ and $D_{T_{\beta,2},T_{\gamma,2},2}=  (b_{(1,2),2}) / (c_{(1,2),2} \cap c_{(1,3),2} \cap c_{(2,3)})$. \hfill $\blacksquare$
\label{Example:Imagine}
\end{exmp}

\subsubsection{FWER for non-binding inner stopping rules}
The FWER is at its greatest if one does not account for the inner boundaries stopping rules i.e. non-binding. Therefore the event, $R'_{\mathbf{T_\beta}}$, where no true null hypotheses are rejected under any given $ \mathbf{T_\beta}=\{T_{\beta,1} ,\hdots, T_{\beta,J} \}$ for a trial with J stages when using non-binding stopping rules equals
$R'_{\mathbf{T_\beta}}= \bigcap^J_{j=1} B_{T_{\beta,j},j}$.  
Under the global null hypothesis $T_{\beta,j}=G$, where $G$ is the set of all the indices for each hypothesis $G=\{(1,2),\hdots, (K-1,K)\}$. The event that no true null hypotheses are rejected simplifies to
$
R'_{\mathbf{G}}= \bigcap^J_{j=1} B_{G,j},  
$
where $\mathbf{G}$ is length $J$ and $\mathbf{G}=\{ G,\hdots, G \}$. The FWER under the global null therefore equals $1-P(R'_{\mathbf{G}})$.
\par
\begin{theorem}
The probability of rejecting at least one true null hypotheses is maximized under the global null hypothesis when non-binding stopping rules are used.
\label{theorem:nonbind}
\end{theorem}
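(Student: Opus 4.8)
The plan is to prove the equivalent statement that $P(R'_{\mathbf{T_\beta}})$, the probability that no true null is rejected, is \emph{minimised} under the global null, so that the FWER $1-P(R'_{\mathbf{T_\beta}})$ is maximised there. I would fix an arbitrary null configuration, let $S \subseteq G$ denote its set of true-null indices, and compare it with the global null, for which the true-null set is all of $G$.

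The key ingredient is a distributional invariance. The full vector of test statistics $(Z_{(k,k^\star),j})$ is multivariate normal, and since the arm-wise sample means $\bar{\psi}_{k,j}$ are independent across treatments with variances fixed by the design, the correlation structure of this vector is the same for every configuration; the treatment effects enter only through the means $(\psi_k-\psi_{k^\star})/\sqrt{V_{(k,k^\star),j}}$. For any $(k,k^\star)\in S$ this mean is zero, so the joint law of the sub-vector $\{Z_{(k,k^\star),j}:(k,k^\star)\in S\}$ is identical under the chosen configuration and under the global null. Writing $W_S=\bigcap_{j=1}^J\bigcap_{(k,k^\star)\in S} b_{(k,k^\star),j}$ for the event that every true-null statistic stays within the outer boundaries at every stage, this invariance gives $P_{\mathrm{gen}}(W_S)=P_{\mathrm{global}}(W_S)$.

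Next I would bound the general-configuration FWER by discarding the data-dependent dropping. A true null can only be rejected while both of its treatments remain in the trial, so the event that some true null is actually rejected is contained in the event that $|Z_{(k,k^\star),j}|>u_j$ for some $(k,k^\star)\in S$ and some $j$; hence $\mathrm{FWER}_{\mathrm{gen}}\le 1-P_{\mathrm{gen}}(W_S)$. Because $S\subseteq G$ imposes weaker constraints than $G$, we have $R'_{\mathbf{G}}=\bigcap_{j=1}^J B_{G,j}\subseteq W_S$, so that $P_{\mathrm{global}}(R'_{\mathbf{G}})\le P_{\mathrm{global}}(W_S)=P_{\mathrm{gen}}(W_S)$. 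Finally, under the global null every hypothesis is true, so with non-binding rules a treatment is dropped only by rejecting a true null; the event that no true null is rejected therefore coincides exactly with $R'_{\mathbf{G}}$, giving $\mathrm{FWER}_{\mathrm{global}}=1-P_{\mathrm{global}}(R'_{\mathbf{G}})$. Chaining the three inequalities yields $\mathrm{FWER}_{\mathrm{gen}}\le 1-P_{\mathrm{gen}}(W_S)\le 1-P_{\mathrm{global}}(R'_{\mathbf{G}})=\mathrm{FWER}_{\mathrm{global}}$ for every configuration, which is the claim. The main obstacle is precisely the randomness of the tested family $\mathbf{T_\beta}$, since dropping is data-dependent and couples the comparisons; the containment step above is what lets me discard this dependence and replace the exact event by a clean bound whose probability is configuration-free.
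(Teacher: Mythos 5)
Your proof is correct and rests on the same core idea as the paper's own proof: the containment chain $\{\text{some true null rejected}\} \subseteq \bigcup_{j}\bigcup_{(k,k^\star)\in S}\{|Z_{(k,k^\star),j}|>u_j\}$ together with $R'_{\mathbf{G}} \subseteq W_S$, which is precisely the paper's chain $R'_{\mathbf{T_\beta}} \supseteq R'_{\mathbf{T^\star_\beta}} \supseteq R'_{\mathbf{G}}$ with $T^\star_\beta$ playing the role of your $S$. The only difference is that you make explicit two steps the paper leaves implicit --- that the joint law of the true-null sub-vector $\{Z_{(k,k^\star),j}:(k,k^\star)\in S\}$ is identical under the general configuration and under the global null (so $P_{\mathrm{gen}}(W_S)=P_{\mathrm{global}}(W_S)$, which is what legitimizes comparing probabilities of events that otherwise live under different measures), and that under the global null with non-binding rules the no-rejection event coincides exactly with $R'_{\mathbf{G}}$ --- so your write-up is a more rigorous rendering of the same argument rather than a different route.
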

\begin{proof}
We begin by defining $T_\beta^\star$ as $T_{\beta,1}$ therefore $T_{\beta,j} \subseteq T_\beta^\star$ and we define $\mathbf{T_\beta^\star}=\{T_\beta^\star,\hdots, T_\beta^\star \}$, so
\begin{equation*}
R'_{\mathbf{T_\beta}}=\bigcap^J_{j=1} B_{T_{\beta,j},j} = \bigcap^J_{j=1}\bigcap_{h \in T_{\beta,j}} b_{h,j}   \supseteq  \bigcap^J_{j=1}\bigcap_{h \in T^\star_{\beta}} b_{h,j} =\bigcap^J_{j=1} B_{T^\star_{\beta},j}=R'_{\mathbf{T^\star_\beta}}.
\end{equation*} 
Then as $T_\beta^\star \subseteq G$,
\begin{equation*}
R'_{\mathbf{T^\star_\beta}}=\bigcap^J_{j=1} B_{T^\star_{\beta},j}=\bigcap^J_{j=1}\bigcap_{h \in T^\star_{\beta}} b_{h,j} \supseteq  \bigcap^J_{j=1}\bigcap_{h \in G} b_{h,j} =\bigcap^J_{j=1} B_{G,j}=R'_{\mathbf{G}}.
\end{equation*} 
Therefore $
1-P(R'_{\mathbf{T_\beta}}) \leq  1-P(R'_{\mathbf{G}}).
$
\end{proof}
Theorem \ref{theorem:nonbind} shows that for the non-binding stopping boundaries, the FWER is maximised under the global null hypothesis, so that if FWER control is at level $\alpha$ under the global null hypothesis then this implies FWER control in the strong sense at level $\alpha$. 
To compute the FWER under the global null hypothesis one needs to calculate $P(R'_{\mathbf{G}})$. This can use the multivariate normal distribution similar to as done in \cite{MagirrD.2012AgDt, GreenstreetPeter2021Ammp}. The exact details on how the probabilities can be computed for non-binding and binding boundaries are provided in the Supporting Information Section 1.

\subsubsection{FWER for binding stopping rules}
When using binding boundaries one can now use the fact that the trial is guaranteed to stop early if all the test statistics of the remaining treatments are within the inner boundaries, along with being able to drop treatments found inferior to other treatments.  When using binding stopping rules the event that no true null hypotheses are rejected under any given set of indices $\mathbf{T_\beta}$ and $\mathbf{T_\gamma}$ ($R_{\mathbf{T_\beta},\mathbf{T_\gamma}}$) equals
\begin{equation*}
R_{\mathbf{T_\beta},\mathbf{T_\gamma}}=\bigcup^{J}_{j=1} \bigg{(} [B_{T_{\beta,j},j} \cap C_{T_{\gamma,j},j}] \cap \bigcap^{j-1}_{i=1} (D_{T_{B,i},T_{\gamma,i},i}) \bigg{)}.  
\end{equation*}
The FWER for given $\mathbf{T_\beta}$ and $\mathbf{T_\gamma}$ is therefore $1-P(R_{\mathbf{T_\beta},\mathbf{T_\gamma}})$.
Similar to the case of the non-binding boundaries, this equation can also be simplified when under the global null. Now $T_{\gamma,j}=T_{\beta,j}=G$ as none of the test statistics can be stopped early from being found inferior compared to another treatment without this being a type I error. One can now use the fact that $C_{G,j}\subseteq B_{G,j}$, and define $D_{G,G,j}=B_{G,j}-C_{G,j}$, as the difference of two sets where the latter set is a subset of the former. Therefore, when using binding stopping rules the event that no treatments are found superior to any other treatment under the global null equals
\begin{equation*}
R_{\mathbf{G},\mathbf{G}}= \bigcup^J_{j=1} \bigg{(} C_{G,j} \cap \bigcap^{j-1}_{i=1} (D_{G,G,i}) \bigg{)}=\bigcup^J_{j=1} \bigg{(} C_{G,j} \cap \bigcap^{j-1}_{i=1} (B_{G,i}-C_{G,i}) \bigg{)}.  
\end{equation*}
The FWER under the global null equals $1-P(R_{\mathbf{G},\mathbf{G}})$.
For binding boundaries, however, it is not always true that controlling the FWER under the global null will result in strong control of the FWER. A simple example of this is given in the Supporting Information Section 2, where it is shown that if one has very wide inner boundaries then under the global null hypothesis the final outer boundaries can be very tight leading to inflated FWER when not under the global null hypothesis.  
\par
While one can not ensure FWER control in the strong sense being implied by control under the global null hypothesis, one can determine if it is for the given boundaries.  This test involves comparing the FWER under the global null to a finite set of alternative configurations assuming the use of non-binding boundaries. The finite set of alternatives is a reduced set of all possible indices $T^\star_\beta$ excluding the empty set and full set. The complete set of indices $T^\star_\beta$ is defined as $\textbf{S}$ which equals  $\textbf{S}=\{S_1,\hdots,S_I\}$  where each $S_i$ is a unique $T^\star_\beta$ for all $i=1,\hdots,I$, where the number of sets of null hypotheses $I$ equals the Bell number minus 2 \citep{BellE.T.1938TIEI}, $Bell(K)-2$. Additionally we define $\mathbf{S'}=\{S'_1,\hdots,S'_{I'}\}$ with $\mathbf{S'}\subseteq \mathbf{S}$ such that $S_i \subseteq \{S'_1,\hdots,S'_{I'}\}$ for all $i=1,\hdots I$. The number of sets, $I'$, in $\mathbf{S'}$ is the Stirling number of the second kind  \citep{graham1989concrete}, $Stirling(K,2)$ as demonstrated in Example \ref{Exmp:Sprime}.  
\par
Theorem \ref{theorem:FWERshow} uses the finite set $\mathbf{S'}$ to test if the FWER is controlled in the strong sense under the global null hypothesis for binding boundaries. One tests every possible set of null hypotheses excluding the global null hypothesis for non-binding boundaries. As shown below, if every possible set can be shown to have lower FWER than the FWER under the global null hypothesis for the binding boundaries then the FWER is controlled in the strong sense under the global null hypothesis for the given binding boundary. This is based on the fact that the FWER for binding boundaries is less than that of non-binding boundaries as proven in Theorem \ref{theorem:FWERshow}.

\begin{theorem}
If the FWER for binding stopping rules under the global null hypothesis is greater than or equal to $1-P\bigg{(}\bigcap^J_{j=1} B_{S'_{i'},j} \bigg{)}$ for all $S'_{i'} \in \mathbf{S'}$ then the FWER for the binding boundaries is controlled in the strong sense under the global null hypothesis.
\label{theorem:FWERshow}
\end{theorem}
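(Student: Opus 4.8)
The plan is to show that, for binding boundaries, the FWER is maximised over all null configurations at the global null, so that $1-P(R_{\mathbf{G},\mathbf{G}})$ is exactly the strong-sense FWER. I would reduce the statement to a chain of three inequalities: for an arbitrary null configuration, with true-null sequence $\mathbf{T_\beta}$ and induced $\mathbf{T_\gamma}$,
\[
1-P(R_{\mathbf{T_\beta},\mathbf{T_\gamma}}) \;\le\; 1-P(R'_{\mathbf{T_\beta}}) \;\le\; \max_{i'}\Big(1-P\big(\bigcap_{j=1}^J B_{S'_{i'},j}\big)\Big) \;\le\; 1-P(R_{\mathbf{G},\mathbf{G}}),
\]
where the rightmost inequality is precisely the hypothesis of the theorem. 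Since the global null itself realises the value $1-P(R_{\mathbf{G},\mathbf{G}})$, establishing the first two inequalities for every configuration gives the result.

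First I would prove the leftmost inequality, that the binding FWER never exceeds the non-binding FWER under the same configuration. The cleanest route is a pathwise/set-inclusion argument: the outer (inferiority) rule is binding under both designs, and the two differ only in whether the inner ``all similar'' rule is allowed to stop the trial. Hence any sample path that rejects a true null under the binding design must reach that stage without the inner rule having triggered, so the identical rejection occurs under the non-binding design; the binding type I error event is therefore contained in the non-binding one, equivalently $R'_{\mathbf{T_\beta}}\subseteq R_{\mathbf{T_\beta},\mathbf{T_\gamma}}$, giving $P(R_{\mathbf{T_\beta},\mathbf{T_\gamma}})\ge P(R'_{\mathbf{T_\beta}})$. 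This is the step I expect to be the main obstacle, because it requires matching the disjoint decomposition of $R_{\mathbf{T_\beta},\mathbf{T_\gamma}}$ over the early-stopping stage against the single intersection defining $R'_{\mathbf{T_\beta}}$, and carefully handling the final stage, where the inner and outer boundaries coincide and the convention $T_{\gamma,J}=T_{\beta,J}$ is used.

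Next I would prove the middle inequality, bounding the non-binding FWER of any non-global configuration by the non-binding FWER of one of the maximal two-block configurations in $\mathbf{S'}$, using two monotonicities. First, exactly as in the proof of Theorem \ref{theorem:nonbind}, replacing the shrinking sequence $\mathbf{T_\beta}$ by the constant sequence with $T^\star_\beta=T_{\beta,1}$ only enlarges the intersection, so $1-P(R'_{\mathbf{T_\beta}})\le 1-P(\bigcap_{j=1}^J B_{T^\star_\beta,j})$. Second, $T^\star_\beta\in\mathbf{S}$ is the set of within-block pairs of the underlying partition, and by the construction of $\mathbf{S'}$ (coarsening any non-global partition to two blocks) there is some $S'_{i'}\supseteq T^\star_\beta$; intersecting over the larger index set only shrinks the event, giving $1-P(\bigcap_{j=1}^J B_{T^\star_\beta,j})\le 1-P(\bigcap_{j=1}^J B_{S'_{i'},j})$. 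Here I would flag one subtlety worth a sentence: the constraints added in passing from $T^\star_\beta$ to $S'_{i'}$ concern statistics that are genuine nulls (mean zero) precisely under the two-block configuration used to evaluate the right-hand side, while the marginal law of the $T^\star_\beta$ statistics is the same mean-zero multivariate normal under both configurations, so the monotonicity is legitimate across the two distributions.

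Finally I would combine the pieces: chaining the first two inequalities shows that every non-global null configuration has binding FWER at most $\max_{i'}\big(1-P(\bigcap_{j=1}^J B_{S'_{i'},j})\big)$, which the hypothesis bounds above by $1-P(R_{\mathbf{G},\mathbf{G}})$, while the global null attains $1-P(R_{\mathbf{G},\mathbf{G}})$ itself. Hence the maximum FWER over all null configurations equals the global-null value, so controlling the binding FWER at level $\alpha$ under the global null controls it in the strong sense.
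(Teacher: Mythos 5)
Your proposal is correct and takes essentially the same route as the paper's own proof: the identical chain of set inclusions $R_{\mathbf{T_\beta},\mathbf{T_\gamma}} \supseteq R'_{\mathbf{T_\beta}} \supseteq \bigcap_{j=1}^J B_{T^\star_{\beta},j} \supseteq \bigcap_{j=1}^J B_{S'_{i'},j}$, with the reduction first to the constant sequence $T^\star_\beta=T_{\beta,1}$ (as in Theorem \ref{theorem:nonbind}) and then to a two-block superset $S'_{i'}\in\mathbf{S'}$, capped off by the theorem's hypothesis. The two points you elaborate beyond the paper --- the pathwise justification of the binding-versus-non-binding inclusion and the observation that the $T^\star_\beta$ statistics have the same mean-zero joint law under the actual configuration as under the two-block configuration used to evaluate $P\big(\bigcap_{j=1}^J B_{S'_{i'},j}\big)$ --- are refinements of steps the paper asserts without comment (the latter being a genuine subtlety the paper glosses over), not a different argument.
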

\begin{proof}
The event of not rejecting any set of true null hypotheses for any $\mathbf{T_{B}}$ and $\mathbf{T_{\gamma}}$ equals:
\begin{align}
R_{\mathbf{T_\beta},\mathbf{T_\gamma}}=\bigcup^{J}_{j=1} \bigg{(} [B_{T_{\beta,j},j} \cap C_{T_{\gamma,j},j}] \cap \bigcap^{j-1}_{i=1} (D_{T_{\beta,i},T_{\gamma,i},i}) \bigg{)} \supseteq \bigcap^J_{j=1} B_{T_{\beta,j},j} \supseteq \bigcap^J_{j=1} B_{T^\star_{\beta},j}
\label{Equ:Prooflessthan}
\end{align}
For $T^\star_{\beta}$ to be a set of true null hypotheses implies $T^\star_{\beta} \in \mathbf{S}$ so that $T^\star_{\beta} \subseteq \{S'_1,\hdots,S'_{I'}\}$. Therefore if
\begin{align}
1-P\bigg{(}\bigcap^J_{j=1} B_{S'_{i'},j} \bigg{)} \leq 1-P\bigg{(}\bigcap^J_{j=1} R_{\mathbf{G},\mathbf{G}} \bigg{)}
\label{Equ:Proofmain}
\end{align}
for all $i'=1,\hdots, I'$ it follows that for any set of possible true null hypotheses, 
\begin{align*}
\bigcap^J_{j=1} B_{T^\star_{B},j} = \bigcap^J_{j=1}\bigcap_{h \in T_{\beta,j}} b_{h,j}   \supseteq \bigcap^J_{j=1}\bigcap_{h \in S_i} b_{h,j} = \bigcap^J_{j=1} B_{T_{S_{i} },j} \supseteq \bigcap^J_{j=1}\bigcap_{h \in S'_{i'} } b_{h,j} = \bigcap^J_{j=1} B_{T_{S'_{i'} },j}
\end{align*}
as $S_i \subseteq S'_{i'}$ for some $i \in 1,\hdots, I$ and some $i' \in 1,\hdots, I'$. If Equation \eqref{Equ:Proofmain} holds for all $T^\star_{\beta} \in \mathbf{S}$ then
$
1-P\bigg{(}R_{\mathbf{T_\beta},\mathbf{T_\gamma}} \bigg{)} \leq 1-P\bigg{(} R_{\mathbf{G},\mathbf{G}} \bigg{)}.
$
\end{proof}
To check if the boundaries that control the FWER under the global null hypothesis also control the FWER in the strong sense then one needs to check that for the chosen boundaries that $P(R_{\mathbf{G},\mathbf{G}}) \leq P\bigg{(}\bigcap^J_{j=1} B_{S'_{i'},j} \bigg{)}$ for all $S'_{i'}$ in $\mathbf{S'}$ in Equation \eqref{eq:exampleSprime}. When calculating this one can use the fact $ P\bigg{(}\bigcap^J_{j=1} B_{S'_{i'},j} \bigg{)}= P(R'_{\mathbf{S'_{i'}}})$ where $\mathbf{S'_{i'}}$ is of length $J$ and $\mathbf{S'_{i'}}=\{S'_{i'}, S'_{i'}, \hdots, S'_{i'}\}$ as demonstrated in Example \ref{Exmp:Sprime}. Therefore it can be calculated in a similar manner to $P(R'_{\mathbf{G}})$ as described in the Supporting Information Section 1.
\par
The order of treatments has no effect on the calculation of $P\bigg{(}\bigcap^J_{j=1} B_{S'_{i'},j} \bigg{)}$ provided that the number of elements are the same and so is the sample size for each treatment. Therefore when there is equal sample size per treatment at each stage $\mathbf{S'}$ can be further reduced to be of length $\lceil (K-1)/2 \rceil$ as shown at the end of Example \ref{Exmp:Sprime}.
\par
If the requirements of Theorem \ref{theorem:FWERshow} are not met, one can guarantee control of FWER by determining the design using non-binding boundaries under the global null hypothesis. By using Theorem \ref{theorem:FWERbindvsnon} these boundaries will be conservative for binding boundaries but guarantee strong control of the FWER. 
\begin{theorem}
The FWER is greater or equal for the non-binding boundaries compared to the binding boundaries for a given $\mathbf{T_\beta}$.
\label{theorem:FWERbindvsnon}
\end{theorem}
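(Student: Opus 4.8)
The plan is to reduce the claim to a single set inclusion between the two ``no true null rejected'' events and then pass to probabilities. Writing the non-binding FWER as $1-P(R'_{\mathbf{T_\beta}})$ and the binding FWER as $1-P(R_{\mathbf{T_\beta},\mathbf{T_\gamma}})$, the inequality to be shown is equivalent to $P(R'_{\mathbf{T_\beta}}) \le P(R_{\mathbf{T_\beta},\mathbf{T_\gamma}})$, which by monotonicity of probability follows once I establish
\[
R'_{\mathbf{T_\beta}} \;=\; \bigcap_{j=1}^J B_{T_{\beta,j},j} \;\subseteq\; R_{\mathbf{T_\beta},\mathbf{T_\gamma}}.
\]
This is precisely the inclusion already invoked (without justification) inside the argument for Theorem~\ref{theorem:FWERshow}, so the work here is to verify it carefully.

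First I would rewrite the binding event in a more transparent form. Using $D_{T_{\beta,i},T_{\gamma,i},i}=B_{T_{\beta,i},i}\cap C_{T_{\gamma,i},i}^{c}$ and absorbing $B_{T_{\beta,j},j}$ into the running intersection of outer-boundary events, the stage-$j$ term becomes $\big(\bigcap_{i=1}^{j} B_{T_{\beta,i},i}\big)\cap C_{T_{\gamma,j},j}\cap\big(\bigcap_{i=1}^{j-1} C_{T_{\gamma,i},i}^{c}\big)$. These $J$ terms are disjoint: the $j$-th is the event that the inner-boundary (similarity) stop is first triggered at stage $j$ while every true-null statistic has stayed inside the outer boundaries up to that point. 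Hence $R_{\mathbf{T_\beta},\mathbf{T_\gamma}}$ is the event ``the binding trial terminates for similarity at some stage with no true null ever rejected''.

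Next I would prove the inclusion pointwise. Fix $\omega\in R'_{\mathbf{T_\beta}}$, so $\omega\in B_{T_{\beta,j},j}$ for every $j$, and let $j^\star$ be the smallest stage with $\omega\in C_{T_{\gamma,j^\star},j^\star}$. Because $\omega\in B_{T_{\beta,i},i}$ for all $i$, the intersection $\bigcap_{i=1}^{j^\star}B_{T_{\beta,i},i}$ contains $\omega$; moreover $\omega\in C_{T_{\gamma,j^\star},j^\star}$ by the definition of $j^\star$, and $\omega\in C_{T_{\gamma,i},i}^{c}$ for every $i<j^\star$ by minimality. Hence $\omega$ lies in the stage-$j^\star$ term and therefore in $R_{\mathbf{T_\beta},\mathbf{T_\gamma}}$.

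The step that needs care — and the one I expect to be the main obstacle — is showing that $j^\star$ is always well defined, i.e. that some stage with $\omega\in C_{T_{\gamma,j},j}$ exists, since otherwise a trajectory in $R'_{\mathbf{T_\beta}}$ could fail to be captured by any term of the union. This is where the final stage enters: one has $T_{\gamma,J}=T_{\beta,J}$ and, for the double-triangular shape, the inner and outer boundaries meet ($u^\star_J=u_J$), so that $C_{T_{\gamma,J},J}=B_{T_{\beta,J},J}$. Since $\omega\in B_{T_{\beta,J},J}$, it follows that $\omega\in C_{T_{\gamma,J},J}$, forcing $j^\star\le J$; intuitively this encodes that the binding trial must terminate by the final analysis. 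With $j^\star$ well defined the inclusion holds, giving $P(R'_{\mathbf{T_\beta}})\le P(R_{\mathbf{T_\beta},\mathbf{T_\gamma}})$ and hence the stated FWER inequality. I would close by noting that the inclusion is generally strict, because membership in the stage-$j^\star$ term imposes no restriction on the statistics at stages beyond $j^\star$, whereas $R'_{\mathbf{T_\beta}}$ constrains the outer-boundary events at every stage — matching the intuition that ignoring the early similarity stop can only enlarge the rejection probability.
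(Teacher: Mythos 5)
Your proof is correct and follows essentially the same route as the paper: the paper's own proof consists of citing the inclusion $R_{\mathbf{T_\beta},\mathbf{T_\gamma}} \supseteq \bigcap_{j=1}^J B_{T_{\beta,j},j} = R'_{\mathbf{T_\beta}}$ from Equation \eqref{Equ:Prooflessthan} and applying monotonicity of probability, exactly your reduction. The added value of your write-up is that you actually verify this inclusion --- in particular the well-definedness of the first similarity-stop stage $j^\star$, which rests on the final-stage conventions $T_{\gamma,J}=T_{\beta,J}$ and $u^\star_J=u_J$ (so that $C_{T_{\gamma,J},J}=B_{T_{\beta,J},J}$) --- a step the paper asserts without justification and which would genuinely fail without those conventions.
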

\begin{proof}
From Equation \eqref{Equ:Prooflessthan},
$
1-P\bigg{(}R_{\mathbf{T_\beta},\mathbf{T_\gamma}} \bigg{)} \leq 1-P\bigg{(} R'_\mathbf{\mathbf{T_\beta}} \bigg{)}.
$
\end{proof}

In the Supporting Information Section 1 the equations to calculate the FWER under the global null hypothesis, for both binding and non-binding boundaries, along with how to calculate the probabilities required for Theorem \ref{theorem:FWERshow} are given.

\begin{exmp}

Consider the motivating example of 4 arms and 3 stages using the double triangular boundaries. The binding bounds can be found under the global null hypothesis as given in Section \ref{Sec:Motivating} to control the FWER at 5\%, so that $P(R_{\mathbf{G},\mathbf{G}})=0.95$. Next one finds $P\bigg{(}\bigcap^J_{j=1} B_{S'_{i'},j} \bigg{)}=P(R'_{\mathbf{S'_{i'}}})$  for the complete set of $\mathbf{S'}$ as is shown in Table \ref{tab:SprimeforME} where $\mathbf{S'}$ has length of $Stirling(4,2)= 7$ and is 
\begin{align}
\begin{split}
\mathbf{S'}=& \Bigg{ \{ } \{(1,2),(1,3),(2,3)\}, \{(1,2),(1,4),(2,4)\}, \{(1,3),(1,4),(3,4)\},  
 \\
&  \{(2,3),(2,4),(3,4)\}, \{(1,2),(3,4)\}, \{(1,3),(2,4)\}, \{(1,4),(2,3)\}   \Bigg{ \} }.
\label{eq:exampleSprime}
\end{split}
\end{align}

\begin{table}[]
\centering
 \caption{ \small The value of $P\bigg{(}\bigcap^J_{j=1} B_{S'_{i'},j} \bigg{)}$, for given set of $S'_i \in \mathbf{S'}$ as given in Equation \eqref{eq:exampleSprime}.}
\resizebox{0.75\textwidth}{!}{\begin{tabular}{c|c c c c c c c}
& $S'_1$ & $S'_2$ & $S'_3$ & $S'_4$ & $S'_5$ & $S'_6$ & $S'_7$ 
\\
\hline
$P\bigg{(}\bigcap^J_{j=1} B_{S'_{i'},j} \bigg{)}$ & 0.972 & 0.972 & 0.972 & 0.972 & 0.979 & 0.979 & 0.979
\end{tabular}}
\label{tab:SprimeforME} 
\end{table}

Since for the motivating example $P\bigg{(}\bigcap^J_{j=1} B_{S'_{i'},j} \bigg{)}$ is  greater than $P(R_{\mathbf{G},\mathbf{G}}) =0.95$ for all $S'_{i'} \in \mathbf{S'}$, so the FWER is controlled in the strong sense. Additionally as the motivating example has equal sample size per treatment at each stage, $\mathbf{S'}$ can be further reduced to 
$
\mathbf{S'}= \Bigg{ \{ } \{(1,2),(1,3),(2,3)\}, \{(1,2),(3,4)\}  \Bigg{ \} },
$
so reducing the number of calculations. 

\hfill $\blacksquare$
\label{Exmp:Sprime} 
\end{exmp}

\subsection{Power}
\label{Sec:Power}
The power of the trial is the probability that a treatment with the clinically relevant effect is found. Similar to the definition of power under the least favourable configuration (LFC) in the MAMS case with a control treatment \citep{MagirrD.2012AgDt} we define power under the LFC as the probability that treatment $k'$ is the only treatment left by the end of the trial, given $\psi_1=\psi_2=\hdots =\psi_{k'-1}=\psi_{k'}-\theta'=\psi_{k'+1}=\hdots = \psi_K$, where $\theta'$ is the clinically relevant effect. The sample size of the trial is found to ensure that the power under the LFC is greater than $1-\beta$, where $1-\beta$ is the pre-defined level of power desired. There are multiple ways in which treatment $k'$ can become the successful treatment in the trial.
When calculating the power under the LFC one must sum over all possible configurations that end in only the clinically relevant treatment being found. 
The power under the LFC therefore equals
\begin{align}
\sum_{\Omega_{p,y} \in \Omega_{p,1} \hdots \Omega_{p,Y}}
\int^{\omega_{u,1}(t_{(1,2),1,y})}_{\omega_{l,1}(t_{(1,2),1,y})} \hdots \int^{\omega_{u,J}(t_{(K-1,K),J,y})}_{\omega_{l,J}(t_{(K-1,K),J,y})} \phi(\mathbf{z},\boldsymbol{\theta},\Sigma)  d\mathbf{z},
\label{Eq:power}
\end{align} 
where $\phi(\mathbf{z},\boldsymbol{\mu},\Sigma)$ is the probability density function of a multivariate normal distribution with mean $\boldsymbol{\mu}$ and covariance matrix $\Sigma$. Here
\begin{align*} 
\boldsymbol{\theta} =\bigg{(}\frac{\psi_{1}- \psi_{2}}{\sqrt{V_{(1,2),1}}},& \hdots, \frac{\psi_{K-1}- \psi_{K}}{\sqrt{V_{(K-1,K),1}}},\hdots \frac{\psi_{1}- \psi_{2}}{\sqrt{V_{(1,2),J}}},\hdots, \frac{\psi_{K-1}- \psi_{K}}{\sqrt{V_{(K-1,K),J}}} \bigg{)},
\end{align*}
with $\psi_1=\psi_2=\psi_{k'-1} =\psi_{k'}-\theta'=\psi_{k'+1},\hdots =\psi_K$ and $\Sigma$ is defined in the Supporting Information Section 1. Here $\Omega_{p,y}$ defines the upper and lower boundaries for a possible configuration which results in only the clinically relevant treatment being found, where $Y$ is the total number of possible configurations and $y=1,\hdots,Y$. Each $\Omega_{p,y}$ is a list of upper and lower boundaries for each test statistic and each stage, so $\Omega_{p,y}=\{\omega_{1}(t_{(1,1),1,y}), \hdots, \omega_{J}(t_{(K-1,K),J,y})\}$, with $\omega_{j}(t_{(k,k^\star),j,y})=(\omega_{l,j}(t_{(k,k^\star),j,y}),\omega_{u,j}(t_{(k,k^\star),j,y}))$ where $\omega_{l,j}(t_{(k,k^\star),j,y})$ is the lower boundary for testing hypothesis $H_{k.k^\star}$ at stage $j$ and $\omega_{u,j}(t_{(k,k^\star),j,y})$ is the upper boundary, for $k,k^\star=1,\hdots K$ and $j=1,\hdots, J$, with $t_{(k,k^\star),j,y}=a_1,a_2,\hdots,a_8$ that defines the values of $\omega_{j}(t_{(k,k^\star),j,y})$, which are defined in Table \ref{tab:tandavalues}.

\begin{table}[h]
\centering
 \caption{ \small The value of $\omega_{j}(t_{(k,k^\star),j,y})$, where $\omega_{j}(t_{(k,k^\star),j,y})= \{\omega_{l,j}(t_{(k,k^\star),j,y}), \omega_{u,j}(t_{(k,k^\star),j,y}) \}$, for given stage $j=1,\hdots,J$ depending on the integer value of  $t_{(k,k^\star),j,y}$.}

\resizebox{1.00\textwidth}{!}{\begin{tabular}{c c c c c c c c c c}
$t_{(k,k^\star),j,y}$  & \vline & $a_1$ & $a_2$ & $a_3$ & $a_4$ & $a_5$ & $a_6$ & $a_7$ & $a_8$
\\
\hline
$\omega_{j}(t_{(k,k^\star),j,y})$ & \vline & $ \{-\infty, -u_j\} $  & $\{ -u_j,-u^\star_j \}$ & $ \{ -u^\star_j, u^\star_j \}$ & $\{ u^\star_j, u_j \}$ &  $\{ u_j, \infty \}$ & $\{ -\infty, \infty \}$ & $\{ -u_j, u_j \}$ & $\{ -u_j, u_j \}$
\end{tabular}}
\label{tab:tandavalues} 
\end{table}
\par 
  Notation $a_1, a_2, \hdots, a_5,$ are used to define the 5 possible areas in which each test statistic value could be, assuming it was still being tested in the trial. The test statistic is either: below the outer lower boundary ($a_1$); between the outer and inner lower boundaries ($a_2$); between the inner lower and upper boundaries ($a_3$); inner and outer upper boundaries ($a_4$); above the outer upper boundary ($a_5$). The remaining 3 values, $a_6, a_7, a_8,$ are used to simplify and streamline the calculations. The notation $a_6$ is used for a test statistic in which one of the treatments being tested has stopped the trial. One can remove any integrals for which $t_{(k,k^\star),j,y}=a_6$ as long as the corresponding $\boldsymbol{\theta}$ and $\Sigma$ values are also removed. 
The notation $a_7$ is used for a test statistic in which at least one of the treatments being tested is dropped at the current stage and the test statistic is not significant enough to cause another treatment to be dropped. Finally $a_8$ is used as for any stage in which $u^\star_j=0$, there are now only $3$ possible outcomes for each test statistic of interest, $a_1, a_5$ and $a_8$.
\par
Based on $a_1, \hdots, a_8$, in the Supporting Information Section 3 it is given how to determine $\boldsymbol{\Omega_p}=\{ \Omega_{p,1}, \hdots, \Omega_{p,Y} \}$ and the corresponding values of $t_{(k,k^\star),j,y}$ for each $\Omega_{p,y}$. To calculate these we use the algorithm given in Supporting Information Section  3. This algorithm runs by first starting with $\boldsymbol{\Omega}=\{ \Omega_{1}, \hdots, \Omega_{Y^\star} \}$, which is a inclusive list of boundaries for all the trial test statistics, of length $Y^\star$. It assumes even if a test statistic falls below the outer lower boundary, or above the outer upper boundary or all are within the inner boundaries, that the test statistic will continued to be studied. This list is then reduced and altered in order to both decrease the number of elements needing calculating for a more efficient calculation of power, and to only leave combinations of bounds which lead to only the one clinically relevant treatment being found. 
  
%
%
%

\par

\subsection{Expected sample size}
The expected sample size is defined as $E(N|\Theta)$ where  $\Theta$ is the effect of all the treatments, so $\Theta=\{ \psi_1,\psi_2,\hdots, \psi_K \}$.  The expected sample size can be found as
\begin{equation}
E(N|\Theta)=\sum_{y'=1}^Y N(\Omega_{E,y'})Q_{\Theta}(\Omega_{E,y'}),
\label{Eq:ESS}
\end{equation}
where $Y'$ is the number of outcomes of interest, $Q_{\Theta}(\Omega_{E,y'})$ is the probability for each outcome and $N(\Omega_{E,y'})$ is the total number of patients associated with each outcome. Also similar to power we define $\mathbf{\Omega_E}=\{ \Omega_{E,1}, \hdots, \Omega_{E,Y'} \}$ where $\Omega_{E,y'}$ is the set of boundaries for that given configuration $y'$. One can use the algorithm given in Supporting Information Section 3 to calculate $\mathbf{\Omega_E}$. In Supporting Information Section 4 the equations to calculate both $Q_{\Theta}(\Omega_{E,y'})$ and $N(\Omega_{E,y'})$ are given. One can  also use $N(\Omega_{E,y'})$ and $Q_{\Theta}(\Omega_{E,y'})$ to find the distribution of the sample size as done in \cite{GreenstreetPeter2021Ammp}.

\section{Numerical results}
\subsection{Setting}
\label{Sec:Motivating}
Below, we revisit the setting of the motivating sepsis trial discussed in \cite{MagaretAmalia2016Doam}.  The power is set to 90\%, double triangular stopping boundaries are used \citep{WhiteheadJ.1997TDaA} and we use the requirement of FWER at 5\% (two-sided). 
\par
Following Theorem \ref{theorem:nonbind} the FWER is controlled in the strong sense if designed under the global null hypothesis for non-binding stopping rules, while in Example \ref{Exmp:Sprime} it is shown that this also holds for binding rules when $\alpha=5\%$. Therefore, for this trial, for both binding and non-binding stopping rules, the FWER will be controlled in the strong sense.
\par
Using the equations given in the Supporting Information Section 1 the double triangular boundaries are found such that the FWER equals 5\%. 
The power under the LFC and expected sample size were calculated using Equation \eqref{Eq:power} and the Equation \eqref{Eq:ESS}, respectively. Both $\boldsymbol{\Omega_p}$ and  $\boldsymbol{\Omega_E}$ were found from $\boldsymbol{\Omega}$ using the algorithm given in Supporting Information Section 3, with $Y^\star=3.814 \times 10^{12}$, and this being reduced to $Y=2974$ and $Y'=25907$ by using this algorithm, where $Y^\star,Y$ and $Y'$ is the number of configurations in $\boldsymbol{\Omega}, \boldsymbol{\Omega_p}, \boldsymbol{\Omega_E}$, respectively. The calculations were carried out using R \citep{Rref} and the packages \texttt{mvtnorm} \citep{mvtnorm}, \texttt{gtools} \citep{warnes2021package}, \texttt{doParallel} \citep{doParallel} and \texttt{foreach} \citep{foreach}. Code is available at https://github.com/petergreenstreet/MAMSAP. 
\subsection{Alternative designs}
The first alternative design considered is to run each comparison as a separate trial while using the double triangular boundaries. For the 4 arm example this will involve running 6 separate trials each with 3 stages. Each one of these trials is designed to have power of 90\% and a two-sided type I error of 5\%. When just powering each individual trial the power is the probability that a clinically relevant treatment is found superior to the other treatment. Therefore this power is different to considering the power across the multiple trials. Across all the trials we define the power under the LFC as the probability of finding the clinically relevant treatment as superior in all the trials it is involved in.
\par
For the first alternative design the total type I error across all the separate trials will equal $1-(1-\alpha)^6$ as $\eta=6$. We also consider the second alternative design where separate trials will be used with the total type I error across all the trials equaling 5\%, so, the type I error for each trial is set to 0.85\%. For this second alternative design we will ensure that the power is controlled at 90\% under the LFC. The total type II error under the LFC across all the trials equals  $1-(1-\beta)^{4-1}$ as there are $K-1$ hypotheses which need to be rejected for there not to be an error. The adjusted power for each trial is therefore $96.5\%$. 
\par
The third alternative design is the method described in \cite{WhiteheadJohn2020Eote}. In \cite{WhiteheadJohn2020Eote} they describe the type I error of interest as the probability of the pairwise type I error for each comparison and the power is the probability that a treatment $k$ is found inferior compared to another treatment $k'$ given $\psi_{k'}-\psi_k=\theta'$. Their approach uses the same trial structure as the design discussed here, however does not account for any correlation between test statistics of different treatments, or the fact all remaining test statistics need to be within the inner boundaries for the trial to stop. 
This approach is presented with the type I error and power as defined in \cite{WhiteheadJohn2020Eote}.
\par
As the third alternative design does not account for the correlation between the test statistics of different treatments we also consider controlling the FWER and power across the entire trial using the Bonferroni correction \citep{bonferroni1936teoria}. This is the fourth alternative approach, in which the type I error for each comparison is set to $\alpha/6=0.083\%$ and the power for each comparison is set to $1-\beta/(4-1)=96.7\%$.

\subsection{Results}  
The sample size, stopping boundaries and the expected sample size, along with power and FWER for the different design options are given in Table \ref{tab:Comparisondesigns}, with all the results being calculated analytically. As expected, the proposed MAMSAP design has the desired FWER control of 5\% and power under LFC of 90\%. The maximum sample size for the design with binding boundaries is $243\times 4= 972$ while the design with non-binding boundaries has a maximum sample size of 984 patients. The expected sample size is studied under 4 configurations: The first is the global null hypothesis, $\Theta_0=(\psi,\psi,\psi,\psi)$ where $\psi$ is the treatment effect of a treatment without a clinically relevant effect; the second  is the LFC, $\Theta_1=(\psi+\theta',\psi,\psi,\psi)$; in the third configuration two treatments have a clinically relevant effect compared to the other treatments, $\Theta_2=(\psi+\theta',\psi+\theta',\psi,\psi)$;  and the fourth, $\Theta_3=(\psi+\theta',\psi+\theta',\psi+\theta',\psi)$ has three treatments with a clinically relevant effect compared to the remaining treatment. The expected sample size under these configuration ranges from 749.9 patients under the null hypothesis to 629.7 patients under $\Theta_2$ for the MAMSAP design for binding boundaries. For the MAMSAP design for non-binding boundaries the expected sample size ranges from 636.6 patients under $\Theta_2$ to $758.0$ patients under the global null hypothesis.
\par
For the MAMSAP design with non-binding boundaries, if the inner boundaries rules are strictly followed, then the FWER is 4.8\%, 
highlighting the small conservatism that can occur if the non-binding boundaries rules are followed. The necessary increase in the stopping boundaries resulting from the use of non-binding rules means that one additional patient per arm per stage is needed to achieve power above 90\%.

%
\par
The operating characteristics for the competing approaches are given in Table \ref{tab:Comparisondesigns} for binding boundaries. The Whitehead approach results in a smaller sample size compared to MAMSAP, however this approach does not control the FWER nor achieves power at the desired level. For example the maximum sample size drops by 38\% compared to MAMSAP, at the cost of an FWER inflation of over $16\%$ and a drop in power by $8.9\%$. When using the Whitehead design with a Bonferroni adjustment, so that the FWER and the power are now controlled, the bounds and sample size are conservative resulting in a larger maximum and expected sample size than required. As a result the expected sample size under the global null hypothesis has increased from 749.9 for the MAMSAP design to 820.1 for the Bonferroni adjusted Whitehead design.
\par
Table \ref{tab:Comparisondesigns} also shows the operating characteristics of running multiple separate trials. Even when not controlling for the FWER or power across all the trials there is still an increase in sample size compared to running MAMSAP due to the need to recruit each treatment group multiple times. The maximum sample size increases from 972 to 1800. Additionally the FWER is inflated to 26.5\% and the power under the LFC is only 73.6\%. The increase in sample size is further emphasised when the power and FWER are controlled across all the trials at the desired level. Now the maximum sample size is increased by over $300\%$ to $3204$ compared to the MAMSAP design.   
\par
In the Supporting Information Section 5 the results for the competing approaches when using non-binding stopping boundaries are provided. Additionally in the Supporting Information Section 6 the probability of finishing the trial with $i$ out of $K'$ clinically relevant treatments is shown for the MAMSAP design under both binding and non-binding stopping rules. In the Supporting Information Section 7 an alternative design of running a set of sequential separate trials is given, where the next trial is based on the previous trial.

\begin{table}[]
\centering
 \caption{\small Operating characteristics of the MAMSAP design for both binding and non-binding boundaries along with the operating characteristics of the competing approaches for binding stopping boundaries.}

\resizebox{0.75\textwidth}{!}{\begin{tabular}{c|c|c|c|c|c|c}
\multirow{4}{*}{Design} & \multirow{4}{*}{$\begin{pmatrix}
u_1\\ 
u_2\\ 
u_3
\end{pmatrix}$}  & \multirow{4}{*}{$\begin{pmatrix}
u_1^\star\\ 
u_2^\star\\ 
u_3^\star
\end{pmatrix}$}  &  \multirow{4}{*}{$\begin{matrix}
\text{FWER}\\ 
\text{Power}\\ 
\end{matrix}$}  & \multirow{4}{*}{$\begin{pmatrix}
n_1\\ 
n_2\\ 
n_3
\end{pmatrix}$}  & \multirow{4}{*}{$\begin{matrix}
\max(N)\\ 
\end{matrix}$}  & \multirow{4}{*}{$\begin{matrix}
E(N|\Theta_0) \\
E(N|\Theta_1) \\ 
E(N|\Theta_2) \\
E(N|\Theta_3)
\end{matrix}$}  \\
& & & & & &  \\
& & & & & &  \\
& & & & & &  \\
\hline
\multirow{2}{*}{MAMSAP} & \multirow{4}{*}{$\begin{pmatrix}
3.166\\ 
2.798\\ 
2.742
\end{pmatrix}$} & \multirow{4}{*}{$\begin{pmatrix}
0\\ 
1.679\\ 
2.742
\end{pmatrix}$} & \multirow{4}{*}{$\begin{matrix}
0.050\\ 
0.900
\end{matrix}$} & \multirow{4}{*}{$\begin{pmatrix}
81\\ 
162\\ 
243 
\end{pmatrix}$}&  \multirow{4}{*}{$\begin{matrix}
972\\ 
\end{matrix}$} & \multirow{4}{*}{$\begin{matrix}
749.9 \\
647.5\\ 
629.7\\ 
669.9
\end{matrix}$} \\
\multirow{2}{*}{with binding}  & & & & & &  \\
\multirow{2}{*}{boundaries} & & & & & &  \\
& & & & & &  
\\
\hline
\multirow{2}{*}{MAMSAP with} & \multirow{4}{*}{$\begin{pmatrix}
3.181\\ 
2.811\\ 
2.755
\end{pmatrix}$}  & \multirow{4}{*}{$\begin{pmatrix}
0.000\\ 
1.687\\ 
2.755
\end{pmatrix}$}  & \multirow{4}{*}{$\begin{matrix}
0.048\\ 
0.903
\end{matrix}$}  & \multirow{4}{*}{$\begin{pmatrix}
82\\ 
164\\ 
246
\end{pmatrix}$}  &  \multirow{4}{*}{$\begin{matrix}
984\\  
\end{matrix}$}  & \multirow{4}{*}{$\begin{matrix}
758.0\\
654.5\\ 
636.6\\ 
677.2
\end{matrix}$}  \\
\multirow{2}{*}{non-binding} & & & & & & \\
\multirow{2}{*}{boundaries} & & & & & &  \\
 & & & & & &  \\
\hline
 & \multirow{4}{*}{$\begin{pmatrix}
2.484\\ 
2.195\\ 
2.151
\end{pmatrix}$} & \multirow{4}{*}{$\begin{pmatrix}
0\\ 
1.317\\ 
2.151
\end{pmatrix}$}  &\multirow{4}{*}{$\begin{matrix}
0.213\\ 
0.811
\end{matrix}$}  & \multirow{4}{*}{$\begin{pmatrix}
50\\ 
100\\ 
150
\end{pmatrix}$} &\multirow{4}{*}{$\begin{matrix}
600\\ 
\end{matrix}$}  &\multirow{4}{*}{$\begin{matrix}
488.8 \\
397.6\\ 
393.6\\ 
428.7
\end{matrix}$}    \\
 \multirow{1}{*}{Whitehead}   & & & & & &  \\
\multirow{1}{*}{design} & & & & & &  \\
& & & & & &  \\
\hline 
\multirow{1}{*}{ Bonferroni} &\multirow{4}{*}{$\begin{pmatrix}
3.213\\ 
2.840\\ 
2.783
\end{pmatrix}$} &\multirow{4}{*}{$\begin{pmatrix}
0\\ 
1.704\\ 
2.783
\end{pmatrix}$}   &\multirow{4}{*}{$\begin{matrix}
0.045\\ 
0.929
\end{matrix}$}  & \multirow{4}{*}{$\begin{pmatrix}
89 \\ 
178\\ 
267
\end{pmatrix}$} & \multirow{4}{*}{$\begin{matrix}
1068 
\end{matrix}$} & \multirow{4}{*}{$\begin{matrix}
820.1 \\
689.9\\ 
676.4\\ 
726.6
\end{matrix}$}  \\
\multirow{1}{*}{adjusted} & & & & & &  \\
\multirow{1}{*}{Whitehead} & & & & & &  \\
\multirow{1}{*}{design} & & & & & &  \\
\hline
  & \multirow{4}{*}{$\begin{pmatrix}
2.484\\ 
2.195\\ 
2.151
\end{pmatrix}$} & \multirow{4}{*}{$\begin{pmatrix}
0\\ 
1.317\\ 
2.151
\end{pmatrix}$}  & \multirow{4}{*}{$\begin{matrix}
0.265\\ 
0.736
\end{matrix}$} & \multirow{4}{*}{$\begin{pmatrix}
50\\ 
100\\ 
150
\end{pmatrix}$} & \multirow{4}{*}{$\begin{matrix}
1800\\ 
\end{matrix}$} & \multirow{4}{*}{$\begin{matrix}
1284.5 \\
1199.3\\ 
1170.8\\ 
1199.3
\end{matrix}$}  \\
\multirow{1}{*}{Separate}  & & & & & &  \\
\multirow{1}{*}{trials} & & & & & &  \\
 & & & & & &  \\
\hline
\multirow{1}{*}{FWER}  & \multirow{4}{*}{$\begin{pmatrix}
3.205\\ 
2.833\\ 
2.776
\end{pmatrix}$} & \multirow{4}{*}{$\begin{pmatrix}
0\\ 
1.699\\ 
2.776
\end{pmatrix}$}  & \multirow{4}{*}{$\begin{matrix}
0.050\\ 
0.905
\end{matrix}$} & \multirow{4}{*}{$\begin{pmatrix}
89\\ 
178\\ 
267
\end{pmatrix}$} & \multirow{4}{*}{$\begin{matrix}
3204\\ 
\end{matrix}$} & \multirow{4}{*}{$\begin{matrix}
2223.8 \\
2090.4\\ 
2045.9\\ 
2090.4
\end{matrix}$}   \\
\multirow{1}{*}{controlled} & & & & & &  \\
\multirow{1}{*}{separate} & & & & & &  \\
\multirow{1}{*}{trials} & & & & & &  \\
\end{tabular}}
\label{tab:Comparisondesigns} 
\end{table}

\section{Discussion}
\label{Sec:Discussion}
\par
The work presented here allows for the design of multi-arm multi-stage trials in which there is no control treatment so all pairwise comparisons are conducted. We have developed a method which allows the calculation of both binding and non-binding stopping boundaries to control the FWER under the global null hypothesis. We show that the design controls the FWER in the strong sense when non-binding rules are used and a test with a finite number of comparisons has been developed in order to test if the FWER is controlled in the strong sense for binding boundaries. Furthermore expressions for the power under the LFC and the expected sample size are provided. Based on a  motivating example we show that the proposed method, MAMSAP design, outperforms alternative approaches that also control FWER and power.
\par
\par
In the Supporting Information Section 8 it is shown that the FWER holds in the strong sense for the double triangular boundaries, for equal sample size per arm per stage, for up to an 8 arm 15 stage example with FWER set to 2.5\%, 5\%, and 10\%. Beyond 8 arms and 15 stages the computation becomes too slow and unstable to accurately check Theorem \ref{theorem:FWERshow}. One could therefore extend this work to see if there is a way to prove strong control of FWER for the double triangular stopping boundaries or if there is a counter example.
\par
Building on the work on adding arms \citep{BurnettThomas2020Aeta,GreenstreetPeter2021Ammp, greenstreet2023preplanned} in the controlled setting, future work will consider this problem for the all pairwise setting.  Such an extension raises questions around the use of non-concurrent treatments and potential bias caused by time trends. 
Both of which are well studied when there is a common control  \citep{lee2020including, MarschnerIanC2022Aoap}.
\par
\par
This paper introduces a framework for designing multi-arm multi-stage trials in which there is no control treatment, centred around normal continuous endpoints. Using the methodology proposed by \cite{JakiT2013Coca}, this approach can accommodate other endpoints, including binary, as discussed in \cite{MagaretAmalia2016Doam}. As one employs this methodology, it is essential to acknowledge potential computational challenges related to the computation of high-dimensional multivariate normal distributions and the large number of feasible outcomes of the trial, particularly when dealing with large numbers of arms and stages. If such challenges arise, one may consider approaches outlined in \cite{BlondellLucy2021Game} for handling the high-dimensional multivariate normal distributions, or alternatively, use a simulation-based approach.

\section*{Acknowledgements}
This report is independent research supported by the National Institute for Health Research (NIHR300576). The views expressed in this publication are those of the authors and not necessarily those of the NHS, the National Institute for Health Research or the Department of Health and Social Care (DHSC). TJ and PM also received funding from UK Medical Research Council (MC UU 00002/14, MC UU 00002/19, MC\_UU\_00040/03). This paper is based on work completed while PG was part of the EPSRC funded STOR-i centre for doctoral training (EP/S022252/1). PG is supported by a CANSTAT trainee award funded by CIHR grant \#262556. For the purpose of open access, the author has applied a Creative Commons Attribution (CC BY) licence to any Author Accepted Manuscript version arising.

\bibliographystyle{chicago}

\bibliography{Bibliography-MM-MC}

\newpage
\begin{center}
    \LARGE{\textbf{Supporting Information}}
\end{center}

\beginsupplement
\section{Calculating FWER under the global null}
\label{SI:CalFWERglobalnull}
When calculating $P(R_{\mathbf{G},\mathbf{G}})$ there are, at each stage, only two possibilities that need to be calculated, either all arms are between $-u$ and $u$ or all are between $-u^\star$ and $u^\star$. Therefore we define $\bar{U}_{j}(\cdot)$ where $\bar{U}_{j}(1)=u^\star_j$ and $\bar{U}_{j}(0)=u_j$. So 
\begin{equation}
P(R_{\mathbf{G},\mathbf{G}})=\sum^J_{j=1} \sum_{\substack{q_j=1 \& q_i \in \{0,1\} \\ i=1,2,\ldots,j}} -1^{(\sum_{i=1}^j (q_i)-1)} \int^{\bar{U}_{1}(q_1)}_{-\bar{U}_{1}(q_1)}\hdots \int^{\bar{U}_{1}(q_1)}_{-\bar{U}_{1}(q_1)}  \hdots \int^{\bar{U}_{j}(q_j)}_{-\bar{U}_{j}(q_j)} \hdots \int^{\bar{U}_{j}(q_j)}_{-\bar{U}_{j}(q_j)} \phi(\mathbf{z},\mathbf{0},\Sigma_{[1:\eta j]})  d\mathbf{z},
\end{equation}
where $\phi(\mathbf{z},\boldsymbol{\mu},\Sigma)$ is the probability density function of a multivariate normal distribution with mean $\boldsymbol{\mu}$ and covariance matrix $\Sigma$. The equation for the covariance matrix, $\Sigma$, is defined below and $[\cdot]$ defines the rows and columns of the covariance matrix, needed. 
For non-binding boundaries one can find $P(R'_\mathbf{G})$ as
\begin{equation}
\int^{u_1}_{-u_1}\hdots \int^{u_1}_{-u_1}  \hdots \int^{u_J}_{-u_J} \hdots \int^{u_J}_{-u_J} \phi(\mathbf{z},\mathbf{0},\Sigma)  d\mathbf{z}.
\label{eq:NonbindFWER}
\end{equation}
To calculate in $P(R'_\mathbf{T^\star_\beta})$ one can use Equation \eqref{eq:NonbindFWER}, however, now simply excluding any test statistics which are related to treatments which don't have equal treatment effect. 
\subsection{The correlation matrix equation}
\label{SI:corrmat}
The correlation matrix, $\Sigma$, structure is
\begin{align*}
\Sigma& =  \left(
  \begin{matrix}
\rho_{((1,2),1) ,((1,2),1)} & \rho_{((1,2),1),((1,3),1)} & \hdots 
\\
\rho_{((1,3),1),((1,2),1)} & \rho_{((1,3),1),((1,3),1)} & \hdots 
\\
\vdots & \vdots & \ddots 
\\
\rho_{((K-1,K),1),((1,2),1)} & \rho_{((1,K),1),((1,3),1)} & \hdots 
\\
\rho_{((1,2),2),((1,2),1)} & \rho_{((1,2),2),((1,3),1)} & \hdots 
\\
\vdots & \vdots & \ddots 
\\
\rho_{((K-1,K),J),((1,2),1)} & \rho_{((K-1,K),J),((1,3),1)} & \hdots \\
\end{matrix}\right.
\\ &
\left.
  \begin{matrix}
\rho_{((1,2),1),((K-1,K),1)} & \rho_{((1,2),1),((1,2),2)} & \hdots & \rho_{((1,2),1),((K-1,K),J)} \\
\\
 \rho_{((1,3),1),((K-1,K),1)} & \rho_{((1,3),1),((1,2),2)} & \hdots & \rho_{((1,3),1),((K-1,K),J)} 
\\
\vdots & \vdots & \ddots & \vdots
\\
 \rho_{((1,K),1),((K-1,K),1)} & \rho_{((1,K),1),((1,2),2)} & \hdots & \rho_{((1,K),1),((K-1,K),J)} 
\\
 \rho_{((1,2),2),((K-1,K),1)} & \rho_{((1,2),2),((1,2),2)} & \hdots & \rho_{((1,2),2),((K-1,K),J)} 
\\
 \vdots & \vdots & \ddots & \vdots
\\
 \rho_{((K-1,K),J),((K-1,K),1)} & \rho_{((K-1,K),J),((1,2),2)} & \hdots & \rho_{((K-1,K),J),((K-1,K),J)} \\
\end{matrix}\right),
\end{align*}
where 
\begin{equation*}
\rho_{((k_1,k_1^\star),j),((k_2,k_2^\star),j^\star)}=corr(Z_{(k_1,k_1^\star),j},Z_{(k_2,k_2^\star),j^\star}).
\end{equation*}

\section{Example of the FWER not being controlled in the strong sense under the global null}

\begin{exmp}
Consider a trial design with 3 arms and 2 stages with equal number of patients per arm per stage. If  $u_1=\infty$ and $u_1^\star=2.2$ then the final boundary needs to be $u_2=1.558$ to control the FWER under the global null hypothesis at a 2-sided level of 5\%. If there are 10 patients per arm per stage, then if $\psi_1+5=\psi_2=\psi_3$ then the FWER under this configuration is $11.9\%$. This is because when $\psi_1+5=\psi_2=\psi_3$ the trial will almost never stop at the first stage, as $Z_{(1,2),1}$ and $Z_{(1,3),1}$ will be, with high probability, less than $-2.2$, and it is not possible to drop a treatment for being inferior as  $u_1=\infty$,  therefore at the final stage the probability of declaring that $\psi_2\neq\psi_3$ is $11.9\%$ with the boundary of $u_2=1.558$.
\hfill $\blacksquare$
\end{exmp}

\section{Calculation of $\boldsymbol{\Omega_p}$ and $\boldsymbol{\Omega_E}$}
\label{App:CalcuationOmega}

Algorithm \ref{Alg:OmegapandE} starts with a $\boldsymbol{\Omega}$ which is an inclusive list of boundaries for all the trial test statistics, assuming that even if a test statistic falls below the outer lower boundary, or above the outer upper boundary or all are within the inner boundaries, that the test statistic will continued to be studied. Similar to $\boldsymbol{\Omega_p}$, $\boldsymbol{\Omega}= \{\Omega_{1}, \hdots, \Omega_{Y^\star}\}$ where each $\Omega_{y^\star}$ is the set of upper and lower boundaries required for that given configuration. The number of configurations is denoted $Y^\star$, so $\Omega_{y^\star}=\{\omega_{1}(t_{(1,1),1,y^\star}), \hdots, \omega_{J}(t_{(K-1,K),J,y^\star})\}$ for $y^\star=1,\hdots, Y^\star$. In total $\boldsymbol{\Omega}$ begins with a list of length $Y^\star=5^{J\eta}$ as every configuration of $t_{k,k^\star,j,y^\star}=a_1,\hdots,a_5$ is considered for every $k \neq k^\star$, $k, k^\star= 1,\hdots, K$, $j=1,\hdots,J$ and $y^\star=1,\hdots,Y^\star$. So we are testing $-\infty <Z_{k,k^\star,j}< l$; $l <Z_{k,k^\star,j}< -u^\star$; $-u^\star < Z_{k,k^\star,j}< u^\star$; $u^\star <Z_{k,k^\star,j}< u$; $u <Z_{k,k^\star,j}< \infty$  for every $Z_{k,k^\star,j}$.

\begin{algorithm}[H]
\caption{To find $\boldsymbol{\Omega_p}$ and $\boldsymbol{\Omega_E}$}
\begin{itemize}
\item[1] Generating every possible combination of $a_1,\hdots,a_5$ for every $t_{(k,k^\star),j,y^\star}$, where $y^\star=1,\hdots ,Y^\star$ where $Y^\star=5^{\eta J}$ . To create a set of all outcomes $\boldsymbol{\Omega}$ 
\item[2] Use Reduction 1 to remove any impossible sets of $\boldsymbol{\Omega}$.
\item[3] Use Reduction 2 to change for any stage in which $u^\star=0$ to replace the any $ t_{(k,k^\star),j,y^\star}= a_2,a_3,a_4$ with the values $t_{(k,k^\star),j,y^\star}=a_8$ then remove any duplicates sets in $\boldsymbol{\Omega}$.
\item[4] Use Reduction 3 to change the final stage to remove the any sets in $\boldsymbol{\Omega}$  with the $t_{(k,k^\star),J,y^\star} = a_2, a_4$. 
\item[5] Repeat the following steps for $j$ from $1:J$. 
\begin{itemize}
\item[i] If $j>1$ use Reduction 5 to replace any hypotheses which stopped the stage before with $t_{(k,k^\star),j,y^\star}=6$ and remove any duplicates sets in $\boldsymbol{\Omega}$.
\item[ii] Use Reduction 4 for stage $j$ to replace any $t_{(k,k^\star),j,y^\star}=a_2,a_3,a_4,a_8$ of treatments which stop at stage $j$  with $t_{(k,k^\star),j,y^\star}=a_7$ and remove any duplicates sets.
\end{itemize}
Now $\boldsymbol{\Omega_E}$ equals the reduced $\boldsymbol{\Omega}$.
\item[6] Use Reduction 6 to remove all sets of $\boldsymbol{\Omega}$ in which any $t_{(k',k^\star),j,y^\star}=a_1$ or $t_{(k,k'),j,y^\star}=a_5$ for hypothesis testing treatment $k'$.
\item[7] Use Reduction 7 to remove all sets of $\boldsymbol{\Omega}$ in which any $t_{(k',k^\star),J,y^\star}=a_1,a_2,a_3,a_4$ and $t_{(k,k'),J,y^\star}=a_2,a_3,a_4,a_5$ for hypothesis testing treatment $k'$.
\item[8] Use Reduction 8 to remove all sets of $\boldsymbol{\Omega}$ in which for each $j$ all $t_{(k,k^\star),j,y^\star}=a_1,a_3,a_5,a_6,a_7$ and at least one of $t_{(k,k^\star),j,y^\star}=a_3$. Now $\boldsymbol{\Omega_p}$ equals the reduced $\boldsymbol{\Omega}$.
\end{itemize}
\label{Alg:OmegapandE}
\end{algorithm}
Once the list $\boldsymbol{\Omega}$ has been created, Algorithm \ref{Alg:OmegapandE} is used to reduce the list to find $\boldsymbol{\Omega_E}$, using the following first 5 reductions and find $\boldsymbol{\Omega_p}$, using the following 8 reductions.
\par           
\underline{Reduction 1:} Test which of the 5 outcomes are possible for a particular $Z_{(k,k^\star),j}$ based on the outcomes of the other test statistics at stage $j$. This is because $Z_{(k,k^\star),j}$ can be rewritten in terms of $Z_{(\dot{k},k),j}$ and $Z_{(\dot{k},k^\star),j}$ where $\dot{k} <k < k^\star$ as
\begin{equation}
Z_{(k,k^\star),j} = \frac{Z_{(\dot{k},k^\star),j}\sqrt{V_{\dot{k},k^\star,j}}-Z_{(\dot{k},k),j}\sqrt{V_{\dot{k},k,j}}}{\sqrt{V_{k,k^\star,j}}}.
\end{equation}
Therefore the maximum value $Z_{(k,k^\star),j}$ for a given $\dot{k}$ is 
\begin{equation*}
\max(Z_{(k,k^\star),j}|\dot{k})=\frac{\max(Z_{(\dot{k},k^\star),j})\sqrt{V_{\dot{k},k^\star,j}}-\min(Z_{(\dot{k},k),j})\sqrt{V_{\dot{k},k,j}}}{\sqrt{V_{k,k^\star,j}}}.
\end{equation*}
Given all values of $\dot{k}$ which are smaller then $k$ then
\begin{equation*}
\max(Z_{(k,k^\star),j})=\min(\max(Z_{(k,k^\star),j}|1), \hdots \max(Z_{(k,k^\star),j}|k-1)).
\end{equation*}
Similarly the minimum value $Z_{(k,k^\star),j}$ for a given $\dot{k}$ is 
\begin{equation*}
\min(Z_{(k,k^\star),j}|\dot{k})=\frac{\min(Z_{(\dot{k},k^\star),j})\sqrt{V_{\dot{k},k^\star,j}}-\max(Z_{(\dot{k},k),j})\sqrt{V_{\dot{k},k,j}}}{\sqrt{V_{k,k^\star,j}}}.
\end{equation*}
Given all values of $\dot{k}$ which are smaller then $k$ gives 
\begin{equation*}
\min(Z_{(k,k^\star),j})=\max(\min(Z_{(k,k^\star),j}|1), \hdots \min(Z_{(k,k^\star),j}|k-1)).
\end{equation*}
Using the maximum value and minimum value that each $Z_{(k,k^\star),j}$ can take, given the range of values $Z_{(\dot{k},k),j}$ and $Z_{(\dot{k},k^\star),j}$ can take, results in a reduction in which of $a_1,\hdots, a_5$ need to be considered as the limits of $Z_{(k,k^\star),j}$. For example in a 3 arm case, with equal sample size per arm, if $t_{(1,3),j}=a_1$ (so $-\infty<Z_{(1,3),j}<-u_j$) and $t_{(1,2),j}=a_5$ ($u_j<Z_{(1,2),j}<\infty$) then we know that $Z_{(2,3),j}<-2u_j$ therefore the only possible area of $t_{(2,3),j}$ is $a_1$. 
\par
\underline{Reduction 2}: For any stage in which $u^\star_j=-u^\star_j=0$ there are only $3$ possible outcomes for each test statistic. 
\par
\underline{Reduction 3}: At the final stage where $u_J^\star=u_J$ there are only 3 outcomes: $-\infty <Z_{k,k^\star,J}< -u_J$; $-u^\star_J < Z_{k,k^\star,J}< u^\star_J$; $u_J <Z_{k,k^\star,J}< \infty$.
\par
\underline{Reduction 4}: If treatment $k$ is dropped at stage $j$ the remaining test statistics for treatment $k$ that are not significant to cause another treatment to be dropped, so between $-u_j$ and $u_j$, have no effect on the rest of the trial as treatment $k$ will be dropped from the following stage. Therefore for treatment $k$ which is dropped from the trial at a given stage $j$ the test statistics related to treatment $k$ have 3 outcomes of interest: $-\infty <Z_{k,k^\star,j}< -u_j$; $-u_j < Z_{k,k^\star,j}< u_j$; $u_j <Z_{k,k^\star,j}< \infty$. One is still interested in the area $-\infty <Z_{k,k^\star,j}< -u_j$ and $u_j <Z_{k,k^\star,j}< \infty$ as from the initial definition of $\Omega_y^\star$ for some $y^\star=1,\hdots,Y$ it is possible for example for $Z_{1,2,j}<-u_j$, $-u_j<Z_{1,3,j}<u_j$, $Z_{2,3,j}<-u_j$ even though this is not possible in reality, so this event will have probability 0 which needs to be accounted for.  
\par
\underline{Reduction 5}: If a treatment has already been dropped, then for the remaining stages the value of its test statistics no longer matter, as in the trial these test statistics would no longer be tested. Therefore for computational convenience  $-\infty < Z_{k,k^\star,j}< \infty$ if treatment $k$ or $k^\star$ was dropped from the trial at stage $j^\star$ where $j^\star<j$ as in reality this test statistic would no longer be of interest in a real trial.
\par
Furthermore when calculating the power under the LFC there are three further reductions that can be made which result in only treatment $k'$ being found as the clinically relevant treatment. 
\par
\underline{Reduction 6}: If treatment $k'$ is found to be the clinically relevant treatment then it can never have been dropped from the trial, therefore $-\infty <Z_{k',k^\star,j}< -u_j$ and $u_j <Z_{k,k',j}< \infty$ are not possible for test statistics still being tested at stage $j$.
\par
\underline{Reduction 7}: At the final stage any remaining treatments must be found inferior to treatment $k'$, therefore, $u_J <Z_{k',k^\star,J}< \infty$ and $-\infty <Z_{k,k',J}< -u_J$ for any treatments still being tested. 
\par
\underline{Reduction 8}: The trial can not stop early for all the treatments being found similar as this means that treatment $k'$ was not found superior to at least one treatment. Therefore one can remove all outcomes which have all remaining test statistics, at any stage $j$, falling within $-u^\star_j$ to $u^\star_j$.
\par
Using these 8 reductions as detailed in Algorithm \ref{Alg:OmegapandE} one can find $\boldsymbol{\Omega_p}$ and $\boldsymbol{\Omega_E}$.

\section{Calculation of $Q_{\Theta}(\Omega_{E,y'})$ and $N(\Omega_{E,y'})$}
\label{App:Expectation}
One can use Algorithm \ref{Alg:OmegapandE} given in Appendix \ref{App:CalcuationOmega} to calculate $\mathbf{\Omega_E}$. Now one can find the probability for each outcome $\Omega_{E,y'}$ given $\Theta$ ($Q_{\Theta}(\Omega_{E,y'})$):
\begin{equation}
Q_{\Theta}(\Omega_{E,y'})= \int^{\omega_{u,1}(t_{(1,2),1,y'})}_{\omega_{l,1}(t_{(1,2),1,y'})} \hdots \int^{\omega_{u,J}(t_{(K-1,K),J,y'})}_{\omega_{l,J}(t_{(K-1,K),J,y'})} \phi(\mathbf{z},\boldsymbol{\theta},\Sigma)  d\mathbf{z},
\end{equation}
where  $\boldsymbol{\theta}$ has $\psi_1,\hdots,\psi_K$ of interest and $\Sigma$ is defined in the Supporting Information Section \ref{SI:corrmat}. 
One needs to find the total number of patients associated with each outcome, 
\begin{equation*}
N(\Omega_{E,y'})=\sum_{k=1}^K n_{k,\bar{j}_{k,y'}},
\end{equation*}
where
\begin{equation*}
\bar{j}_{k,y'}=\min_j( [t_{(k^\star,k),j,y}=a_6 \; \forall \; k^\star=1,\hdots,k-1  \cap t_{(k,k'),j,y}=a_6 \;\forall \; k'=k+1,\hdots,K] \cup [j-1=J])-1,
\end{equation*}
so $\bar{j}_{k,y'}$ gives the stage at which treatment $k$ stopped being recruited to, for configuration $y'$.

\section{Non-binding results}
\label{SI:NonBinding}
Table \ref{tab:ComparisondesignsNon} gives the operating characteristics of the competing approaches for non-binding stopping boundaries as done for binding stopping boundaries in Table 3.
\begin{table}[H]
\centering
 \caption{Operating characteristics of the MAMSAP design and competing approaches for non-binding stopping boundaries.}

\begin{tabular}{c|c|c|c|c|c|c}

\multirow{4}{*}{Design} & \multirow{4}{*}{$\begin{pmatrix}
u_1\\ 
u_2\\ 
u_3
\end{pmatrix}$}  & \multirow{4}{*}{$\begin{pmatrix}
u_1^\star\\ 
u_2^\star\\ 
u_3^\star
\end{pmatrix}$}  &  \multirow{4}{*}{$\begin{matrix}
\text{FWER}\\ 
\text{Power}\\ 
\end{matrix}$}  & \multirow{4}{*}{$\begin{pmatrix}
n_1\\ 
n_2\\ 
n_3
\end{pmatrix}$}  & \multirow{4}{*}{$\begin{matrix}
\max(N)\\ 
\end{matrix}$}  & \multirow{4}{*}{$\begin{matrix}
E(N|\Theta_0) \\
E(N|\Theta_1)\\ 
E(N|\Theta_2) \\
E(N|\Theta_3)
\end{matrix}$}  \\
& & & & & &  \\
& & & & & &  \\
& & & & & &  \\
\hline
 & \multirow{4}{*}{$\begin{pmatrix}
3.181\\ 
2.811\\ 
2.755
\end{pmatrix}$}  & \multirow{4}{*}{$\begin{pmatrix}
0.000\\ 
1.687\\ 
2.755
\end{pmatrix}$}  & \multirow{4}{*}{$\begin{matrix}
0.048\\ 
0.903
\end{matrix}$}  & \multirow{4}{*}{$\begin{pmatrix}
82\\ 
164\\ 
246
\end{pmatrix}$}  &  \multirow{4}{*}{$\begin{matrix}
984\\  
\end{matrix}$}  & \multirow{4}{*}{$\begin{matrix}
758.0\\
654.5\\ 
636.6\\ 
677.2
\end{matrix}$}  \\
 \multirow{1}{*}{MAMSAP} & & & & & & \\
\multirow{1}{*}{ design}& & & & & & \\
 & & & & & &  \\
\hline
 & \multirow{4}{*}{$\begin{pmatrix}
2.517\\ 
2.225\\ 
2.180
\end{pmatrix}$} & \multirow{4}{*}{$\begin{pmatrix}
0.000\\ 
1.335\\ 
2.180
\end{pmatrix}$}  & \multirow{4}{*}{$\begin{matrix}
0.201\\ 
0.813
\end{matrix}$} &\multirow{4}{*}{$\begin{pmatrix}
51\\ 
102\\ 
153
\end{pmatrix}$} & \multirow{4}{*}{$\begin{matrix}
612\\ 
\end{matrix}$} & \multirow{4}{*}{$\begin{matrix}
497.8\\
406.8\\ 
402.1\\ 
437.1
\end{matrix}$}    \\
 \multirow{1}{*}{Whitehead} & & & & & &  \\
 \multirow{1}{*}{ design}& & & & & &  \\
 & & & & & &  \\
\hline
  \multirow{1}{*}{Bonferroni} & \multirow{4}{*}{$\begin{pmatrix}
3.235\\ 
2.859\\ 
2.801
\end{pmatrix}$} & \multirow{4}{*}{$\begin{pmatrix}
0\\ 
1.715\\ 
2.801
\end{pmatrix}$} &\multirow{4}{*}{$\begin{matrix}
0.042\\ 
0.930
\end{matrix}$}  & \multirow{4}{*}{$\begin{pmatrix}
90\\ 
180\\ 
270
\end{pmatrix}$} & \multirow{4}{*}{$\begin{matrix}
1080\\ 
832.0
\end{matrix}$} & \multirow{4}{*}{$\begin{matrix}
698.2\\ 
684.1\\ 
734.0
\end{matrix}$} \\
 \multirow{1}{*}{adjusted} & & & & & & \\
 \multirow{1}{*}{Whitehead} & & & & & & \\
 \multirow{1}{*}{design} & & & & & & \\
\hline
 & \multirow{4}{*}{$\begin{pmatrix}
2.517\\ 
2.225\\ 
2.180
\end{pmatrix}$} & \multirow{4}{*}{$\begin{pmatrix}
0.000\\ 
1.335\\ 
2.180
\end{pmatrix}$} & \multirow{4}{*}{$\begin{matrix}
0.248\\ 
0.739
\end{matrix}$} & \multirow{4}{*}{$\begin{pmatrix}
51\\ 
102\\ 
153
\end{pmatrix}$} & \multirow{4}{*}{$\begin{matrix}
1836\\ 
\end{matrix}$} &\multirow{4}{*}{$\begin{matrix}
1308.9\\
1224.6\\ 
1196.5\\ 
1224.6
\end{matrix}$}  \\
\multirow{1}{*}{Separate} & & & & & &  \\
 \multirow{1}{*}{trials}& & & & & &  \\
 & & & & & &  \\
\hline
\multirow{1}{*}{FWER}   & \multirow{4}{*}{$\begin{pmatrix}
3.227\\ 
2.852\\ 
2.794
\end{pmatrix}$} & \multirow{4}{*}{$\begin{pmatrix}
0\\ 
1.711\\ 
2.794
\end{pmatrix}$} & \multirow{4}{*}{$\begin{matrix}
0.047\\ 
0.901
\end{matrix}$} & \multirow{4}{*}{$\begin{pmatrix}
89\\ 
178\\ 
267
\end{pmatrix}$} & \multirow{4}{*}{$\begin{matrix}
3204\\ 
\end{matrix}$} & \multirow{4}{*}{$\begin{matrix}
2222.0 \\
2095.7\\ 
2053.7\\ 
2095.7
\end{matrix}$}  \\
\multirow{1}{*}{controlled} & & & & & &\\
\multirow{1}{*}{separate}  & & & & & &\\
\multirow{1}{*}{trials} & & & & & & \\
\end{tabular}
\label{tab:ComparisondesignsNon} 
\end{table}

\section{The probability of finishing the trial with $i$ out of $K'$ clinically relevant treatments}
In Table \ref{tab:Breakdown} the probability of finishing the trial with $i$ out of $K'$ clinically relevant treatments is shown for the MAMSAP design under both binding and non-binding stopping rules. Here 
 $\Theta_4=(\psi+\theta',\psi+\theta',\psi+\theta',\psi+\theta')$. One should note that $\Theta_4$ is also equivalent to being under the global null as all the treatments have the same treatment effect. For both the binding and non-binding boundaries, under $\Theta_1$, the probability of finding one treatment with a clinically relevant effect equals the power under the LFC as planned. Moreover the probability of finding all 4 treatments with a clinically relevant effect equals one minus the FWER under $\Theta_4$. It can be seen for this example that when under the LFC the probability of finding $K'$ out of $K'$ clinically relevant treatments is at its lowest. It is at its highest when there are 2 clinically relevant treatments, with the probability of finding both clinically relevant treatments being 97.1\% and 97.2\% for binding and non-binding boundaries respectively. When there are two clinically relevant treatments then one or both of the two clinically relevant treatments can be found to be superior to the other null treatments. 
This is why the power under this configuration is higher compared to the LFC where there is only one clinically relevant treatment.
\par
On the right hand side of Table \ref{tab:Breakdown} there is the probability of ending the trial with $i^\star$ out of $K-K'$ treatments which do not have a clinically relevant effect. Under the global null hypothesis the trial will ideally finish with all 4 null treatments being declared similar. This is set to be controlled at the $5\%$ level, therefore for $i^\star=4$ in this case this gives $95\%$ for binding boundaries. When not all treatments are identified as equal under the global null hypothesis, most often only one treatment is dropped. For the binding boundaries under the LFC it can be seen that the probability of ending the trial with 1 null treatment is at 7.9\%, which is greater than the level of control for the FWER. This is because the power is set to 90\% so there is a 10\% chance that one or more of the null treatments will not have been rejected by the end of the trial.   
\par

In Table \ref{tab:Breakdown} the breakdown of the probabilities for the Whitehead design are also given. 
 For the Whitehead design for binding boundaries the effect of not controlling the FWER or power under the LFC across the entire design can be seen. Now there is only a 78.6\% chance of ending the trial without wrongly rejecting a null hypotheses as shown for $\Theta_0$. Additionally there is a 13.7\% chance that under the LFC there is still 1 treatment without a clinically relevant effect at the end of the trial. When studying the Bonferroni adjusted Whitehead design it can be seen that the design is overly conservative which is also shown in Table 3. When considering the separate trials design one is unable to produce these results as now there is a chance that the separate trials can end in  contradictory results. For example one may find that one can reject $H_{1,2}$ and declare that treatment 1 is superior so $\psi_1>\psi_2$, however one may find in another trial that $\psi_2>\psi_3$ and that $\psi_3>\psi_1$ as each trial is independent. As a result this is another drawback of running multiple separate trials. 

\par
In the Supporting Information Section \ref{SI:GeneralAlg} a more generalised algorithm of Algorithm \ref{Alg:OmegapandE} in Supporting Information Section \ref{App:CalcuationOmega} is given to find the set needed to calculate the power for $K'$ clinically relevant treatments.

\begin{table}[H]
\centering
 \caption{The probability of finishing the trial declaring $i$ out of $K'$ clinically relevant treatments under five different configurations: $\Theta_0=(\psi,\psi,\psi,\psi)$; $\Theta_1=(\psi+\theta',\psi,\psi,\psi)$; $\Theta_2=(\psi+\theta',\psi+\theta',\psi,\psi)$; $\Theta_3=(\psi+\theta',\psi+\theta',\psi+\theta',\psi)$; $\Theta_4=(\psi+\theta',\psi+\theta',\psi+\theta',\psi+\theta')$. Along with the probability of ending the trial with $i^\star$ treatments out of $K-K'$ which do not have a clinically relevant effect.}

\begin{tabular}{c|c|c|c|c|c|c|c|c}

\multicolumn{9}{c}{\textbf{Binding boundaries}}
\\
\hline
Treatment&\multicolumn{4}{c|}{Number of clinical relevant}&\multicolumn{4}{c}{Number of null}
\\
effect& 1  &  2 & 3 & 4 & 1 & 2 &  3   & 4 \\
\hline
$\Theta_0$& - & - & - & - &  0.000 & 0.004 & 0.045 & 0.950   \\
 $\Theta_1$ & 0.900  & -  & - & - & 0.079 & 0.016 & 0.004 & -   \\
$\Theta_2$ & 0.010 & 0.971 & - & - & 0.018 & 0.001 & - & -   \\
 $\Theta_3$ & 0.001 & 0.026 & 0.969 & - & 0.004 & - & - & -   \\
 $\Theta_4$ & 0.000 & 0.004 & 0.045 & 0.950 & - & - & - & -
\\

\hline
\multicolumn{9}{c}{\textbf{Non-Binding Boundaries}}
\\
\hline
Treatment&\multicolumn{4}{c|}{Number of clinical relevant}&\multicolumn{4}{c}{Number of null}
\\
effect& 1  &  2 & 3 & 4 & 1 & 2 &  3   & 4 \\
\hline
$\Theta_0$& - & - & - & - &  0.000 & 0.004 & 0.044 & 0.952   \\
 $\Theta_1$ & 0.903 & -  & - & - & 0.077 & 0.016 & 0.004 & -   \\
$\Theta_2$ & 0.009 & 0.972 & - & - & 0.017 & 0.001 & - & -   \\
 $\Theta_3$ & 0.001 & 0.025 & 0.970 & - &  0.004 & - & - & -   \\
 $\Theta_4$ & 0.000 & 0.004 & 0.044 & 0.952 & - & - & - & -   \\
 
 \hline

\multicolumn{9}{c}{\textbf{Whitehead Design}}
\\
\hline
Treatment&\multicolumn{4}{c|}{Number of clinical relevant}&\multicolumn{4}{c}{Number of null}
\\
effect& 1  &  2 & 3 & 4 & 1 & 2 &  3   & 4 \\
\hline
$\Theta_0$& - & - & - & - & 0.006 & 0.037 & 0.171 & 0.786 \\
 $\Theta_1$ & 0.811 & -  & - & - & 0.137 & 0.040 & 0.013 & -   \\
$\Theta_2$ & 0.051 & 0.899 & - & - & 0.047 & 0.003 & - & -   \\
 $\Theta_3$ & 0.014 & 0.113 & 0.860 & - & 0.013 & - & - & -   \\
 $\Theta_4$ & 0.006 & 0.037& 0.171 & 0.786 & - & - & - & -   \\ 
 
\hline

\multicolumn{9}{c}{\textbf{Bonferroni adjusted Whitehead Design}}
\\
\hline
Treatment&\multicolumn{4}{c|}{Number of clinical relevant}&\multicolumn{4}{c}{Number of null}
\\
effect& 1  &  2 & 3 & 4 & 1 & 2 &  3   & 4 \\
\hline
$\Theta_0$& - & - & - & - &  0.000 & 0.004 & 0.040 & 0.956   \\
 $\Theta_1$ & 0.929 & -  & - & - & 0.058 & 0.010 & 0.002 & -   \\
$\Theta_2$ & 0.009 & 0.980 & - & - & 0.011 & 0.000 & - & -   \\
 $\Theta_3$ & 0.001 & 0.023 & 0.974 & - & 0.002 & - & - & -   \\
 $\Theta_4$ & 0.000 & 0.004 & 0.040 & 0.956 & - & - & - & -   \\

\end{tabular}
\label{tab:Breakdown} 
\end{table}

\subsection{Generalised version of Algorithm S1}
\label{SI:GeneralAlg}
Let $\mathbf{k'}= \{k'_1,\hdots, k'_{K'}\}$ define the set of treatments with a clinically relevant effect. Let $\Omega_{p,K'}$ be the set of possible outcomes for power given that there are $K'$ clinically relevant treatments. Using Algorithm \ref{Alg:GenOmegap} the power for given $\mathbf{k'}$ can be found using $\boldsymbol{\Omega_{p,K'}}$ with Equation (3.4) with $\psi_1=\psi_2=\psi_{k_1'-1}=\psi_{k_1'}-\theta'=\psi_{k_1'+1}=\hdots = \psi_{k_{K'}'-1} =\psi_{k_{K'}'}-\theta'=\psi_{k_{K'}'+1}= \hdots =\psi_K$. 
For Algorithm \ref{Alg:GenOmegap} the final 3 reductions need to be edited and 1 additional one added compared to Algorithm \ref{Alg:OmegapandE}. 
\par
 \underline{Reduction $6^\star$:} Treatments $k_1',\hdots k'_{K'}$ can never be dropped from the trial therefore $-\infty <Z_{k_i',k^\star,j}< -u_j$ and $u_j <Z_{k,k_i',j}< \infty$ for all $k'_i=k_1',\hdots k'_{K'}$ are not possible for test statistics still being tested at stage $j$.
\par
\underline{Reduction $7^\star$:} At the final stage any remaining treatments not in the set $\mathbf{k'}$ must be found inferior to treatments in $\mathbf{k'}$ therefore $u_J <Z_{k_i',k^\star,J}< \infty$ and $-\infty <Z_{k,k_i',J}< -u_J$ for $k'_i \in \mathbf{k'}$ and $k,k^\star \notin \mathbf{k'}$ for any treatments still being tested.
\par 
\underline{Reduction $8^\star$:} The trial can not stop early for futility if any treatments $k  \notin \mathbf{k'}$ is still being tested. Therefore one can remove all outcomes which have all remaining test statistics, at any stage $j$, falling within $-u^\star_j$ to $u^\star_j$ which includes a treatment $k \notin \mathbf{k'}$. 
\par
\underline{Reduction $9^\star$:} If the trial stops at stage $j$ all the test statistics testing $k'_i \in \mathbf{k'}$ against $k'_{i^\star} \in \mathbf{k'}$ must finish falling within $-u^\star_j$ to $u^\star_j$.  

\begin{algorithm}[H]
\caption{To find $\boldsymbol{\Omega_{p,K'}}$}
\begin{itemize}
\item[1] Generating every possible combination of $a_1,\hdots,a_5$ for every $t_{(k,k^\star),j,y^\star}$, where $y^\star=1,\hdots ,Y^\star$ where $Y^\star=5^{\eta j}$ . To create a set of all outcomes $\boldsymbol{\Omega}$ 
\item[2] Use Reduction 1 to remove any impossible sets of $\boldsymbol{\Omega}$.
\item[3] Use Reduction 2 to change for any stage in which $u^\star=0$ to replace the any $ t_{(k,k^\star),j,y^\star}= a_2,a_3,a_4$ with the values $t_{(k,k^\star),j,y^\star}=a_8$ then remove any duplicates sets in $\boldsymbol{\Omega}$.
\item[4] Use Reduction 3 to change for the final stage to remove the any sets in $\boldsymbol{\Omega}$  with the $t_{(k,k^\star),J,y^\star} = a_2, a_4$. 
\item[5] Repeat the following steps for $j$ from $1:J$. 
\begin{itemize}
\item[i] If $j>1$ use Reduction 5 to replace any hypotheses which stopped the stage before with $t_{(k,k^\star),j,y^\star}=a_6$ and remove any duplicates sets in $\boldsymbol{\Omega}$.
\item[ii] Use Reduction 4 for stage $j$ to replace any $t_{(k,k^\star),j,y^\star}=a_2,a_3,a_4,a_8$ of treatments which stop at stage $j$ and remove any duplicates sets.
\end{itemize}
\item[6] Use Reduction $6^\star$ to remove all sets of $\boldsymbol{\Omega}$ in which any $t_{(k'_i,k^\star),j,y^\star}=a_1$ or $t_{(k,k'_i),j,y^\star}=a_5$ for hypothesis testing any treatment $k'_i \in \mathbf{k'}$.
\item[7] Use Reduction $7^\star$ to remove all sets of $\boldsymbol{\Omega}$ in which any $t_{(k'_i,k^\star),J,y^\star}=a_1,a_2,a_3,a_4$ and $t_{(k,k'_i),J,y^\star}=a_2,a_3,a_4,a_5$ for hypothesis testing treatment $k'_i \in \mathbf{k'}$ and $k,k^\star \notin \mathbf{k'}$.
\item[8] Use Reduction $8^\star$  to remove all sets of $\boldsymbol{\Omega}$ in which for each $j$ all $t_{(k,k^\star),j,y^\star}=a_1,a_3,a_5,a_6,a_7$ and at least one of $t_{(k,k^\star),j,y^\star}=a_3$ where either $k\notin \mathbf{k'}$ or $k^\star \notin \mathbf{k'}$. 
\item[9] Use Reduction $9^\star$  to remove all sets of $\boldsymbol{\Omega}$ in which for each $j$ all $t_{(k,k^\star),j,y^\star}=a_1,a_3,a_5,a_6,a_7$ and at least one of $t_{(k'_i,k'_{i^\star}),j,y^\star} \neq a_3$ where $k'_i,k'_{i^\star}\in \mathbf{k'}$. 
Now $\boldsymbol{\Omega_{p,K'}}$ equals the reduced $\boldsymbol{\Omega}$.
\end{itemize}
\label{Alg:GenOmegap}
\end{algorithm}

\section{Sequential separate trials}
\label{SI:SequentialST}
We consider the case that each of the separate trials are run sequentially for the binding boundary case. If one is testing treatment $k$ compared to treatment  $k^\star$ then one of three scenarios can happen which results in the trial stopping: 1) $Z_{(k,k^\star),j}>u_j$ then treatment $k^\star$ is found superior to treatment $k$ and only treatment $k^\star$ will continue being tested; 2) $Z_{(k,k^\star),j}<-u_j$ then treatment $k$ is found superior to treatment $k^\star$ and only treatment $k$ will continue being tested; 3) $-u_j^\star<Z_{(k,k^\star),j}<u_j^\star$ then treatment $k$ is found similar to treatment $k^\star$ and only one treatment will continue being tested, this will be chosen as treatment $k$. We assume in the model that treatment 1 will be compared to treatment 2 and then whichever treatment goes forward will be compared to treatment 3 and so on. Therefore now one only needs to conduct 3 trials for the motivating example.
\par
The FWER, power, and expected sample size under the least favourable configuration when using the separate trial design configuration given in Section 4.1 is shown in Table \ref{tab:Sequential}. As can be seen now the power under the LFC is now dependent on which treatment has the clinically relevant effect due to the ordering. For example for treatment 1 to be found clinically relevant it needs to be shown to be superior to all the other treatments, whereas for treatment 4 to be found clinically relevant it only need to be found superior to one treatment. 
\par
When comparing this design to MAMSAP it can be seen there is a decrease in sample size by using sequential separate trials with a decrease in maximum sample size of 72 patients. However this comes with a large decrease in the power, with it being as low as 73.6\% compared to the target of 90\% as well as in inflation of the FWER to 14.3\%.
\par
Additionally we consider the case of a sequence of separate trials are run where there is control of the FWER and power under the LFC. As 3 trials need to be conducted the type I error of each trial is $1-\sqrt[3]{1-0.05}=0.017$ and the power for each trial is $\sqrt[3]{0.90}=0.965$. This design is also given in Table \ref{tab:Sequential} with the stopping boundaries and sample size for each trial being
\begin{equation*}
\begin{pmatrix}
u_1\\ 
u_2\\ 
u_3
\end{pmatrix} = \begin{pmatrix}
2.958\\ 
2.615\\ 
2.562
\end{pmatrix},
\; \; \; \; \; \; 
\begin{pmatrix}
u_1^\star\\ 
u_2^\star\\ 
u_3^\star
\end{pmatrix} = \begin{pmatrix}
0.000\\ 
1.569\\ 
2.562
\end{pmatrix},
\; \; \; \; \; \;
\begin{pmatrix}
n_1\\ 
n_2\\ 
n_3
\end{pmatrix} =\begin{pmatrix}
81\\ 
162\\ 
243
\end{pmatrix}. 
\end{equation*}
As can be seen here this design requires a larger maximum sample size and expected sample size under the configuration considered here compared to the MAMSAP design, with the maximum sample size increased by 486 patients.
\begin{table}[H]
\centering
 \caption{Operating characteristics of the MAMSAP design for binding  boundaries along with the operating characteristics of the sequential separate trials for binding stopping boundaries.}

\begin{tabular}{c|c|c|c|c}
\multirow{4}{*}{Design} & \multirow{4}{*}{$\begin{matrix}
\text{FWER}\\ 
\end{matrix}$}  & \multirow{4}{*}{$\begin{matrix}
\text{Power }|(\psi+\theta',\psi,\psi,\psi)\\ 
\text{Power }|(\psi,\psi+\theta',\psi,\psi)\\
\text{Power }|(\psi,\psi,\psi+\theta',\psi)\\
\text{Power }|(\psi,\psi,\psi,\psi+\theta')\\
\end{matrix}$}  &   \multirow{4}{*}{$\begin{matrix}
\max(N)\\ 
E(N|(\psi,\psi,\psi,\psi))\\
\end{matrix}$}   & \multirow{4}{*}{$\begin{matrix}
E(N|(\psi+\theta',\psi,\psi,\psi))\\ 
E(N|(\psi,\psi+\theta',\psi,\psi))\\
E(N|(\psi,\psi,\psi+\theta',0))\\
E(N|(\psi,\psi,\psi,\psi+\theta'))\\
\end{matrix}$}  \\
& & & &   \\
& & & &   \\
& & & &   \\
\hline
\multirow{2}{*}{MAMSAP} & \multirow{4}{*}{$\begin{matrix}
0.050
\end{matrix}$} & \multirow{4}{*}{$\begin{matrix}
0.900\\
0.900\\
0.900\\
0.900
\end{matrix}$} & \multirow{4}{*}{$\begin{matrix}
972\\ 
749.9 \\
\end{matrix}$} & \multirow{4}{*}{$\begin{matrix}
647.5\\
647.5\\ 
647.5\\ 
647.5
\end{matrix}$} \\
\multirow{2}{*}{with binding}  & & & &   \\
\multirow{2}{*}{boundaries} & & & &    \\
& & & &  
\\
\hline
\multirow{2}{*}{Sequential}  & \multirow{4}{*}{$\begin{matrix}
0.143
\end{matrix}$} & \multirow{4}{*}{$\begin{matrix}
0.736 \\
0.736\\ 
0.815\\
0.903
\end{matrix}$}  & \multirow{4}{*}{$\begin{matrix}
900 \\
642.3
\end{matrix}$} & \multirow{4}{*}{$\begin{matrix}
557.0 \\
576.6 \\ 
588.2 \\
613.9
\end{matrix}$} \\
\multirow{2}{*}{separate}  & & & &   \\
\multirow{2}{*}{trials} & & & &  \\
 & & & &  \\
\hline
\multirow{1}{*}{FWER}  & \multirow{4}{*}{$\begin{matrix}
0.050
\end{matrix}$} & \multirow{4}{*}{$\begin{matrix}
0.902 \\
0.902\\ 
0.934\\
0.966
\end{matrix}$}  & \multirow{4}{*}{$\begin{matrix}
1458 \\
1022.1
\end{matrix}$} & \multirow{4}{*}{$\begin{matrix}
862.7 \\
874.9 \\ 
917.6 \\
969.0
\end{matrix}$}   \\
\multirow{1}{*}{controlled} & & & &   \\
\multirow{1}{*}{sequential} & & & &   \\
\multirow{1}{*}{separate trials} & & & &  \\
\end{tabular}
\label{tab:Sequential} 
\end{table}

\section{Double triangular boundaries}
\label{SI:Doubletri}

In Figure \ref{fig:comparisonplot} the double triangular stopping boundaries are found to control the FWER under the global null for binding boundaries. Here we consider an equal number of patients per stage per arm and the FWER control  target, $\alpha$ is 2.5\%, 5\% and 10\%. Figure 2 shows  $\max\bigg{(}1-P\bigg{(}\bigcap^J_{j=1} B_{S'_{i'},j} \bigg{)} \bigg{)}$ for all $S'_{i'} \in \mathbf{S'}$ for each  $\alpha$ level when using the boundaries found to control the FWER under the global null. It can be seen that, at all points in Figure \ref{fig:comparisonplot}, the probability of $\max\bigg{(}1-P\bigg{(}\bigcap^J_{j=1} B_{S'_{i'},j} \bigg{)} \bigg{)}$ is below that of the FWER of focus. Therefore by Theorem 3.3 this shows that for the double triangular stopping boundaries, with equal sample size per stage per arm, the FWER is controlled in the strong sense when using boundaries found under the global null hypothesis for up to 8 arms and 15 stages.

\begin{figure}[H]
\begin{subfigure}{1\textwidth}
  \centering
  \includegraphics[width=.75\linewidth,trim= 0 0.5cm 0 2cm, clip]{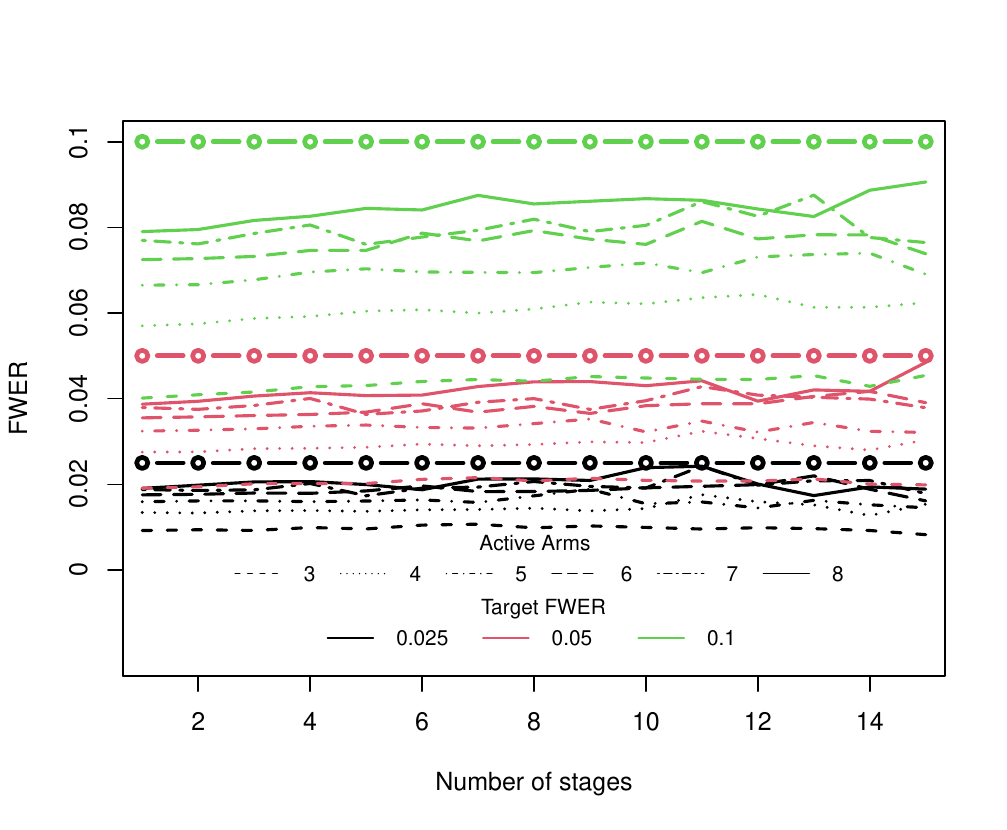}  
\end{subfigure}
\caption{Comparison of the $\max(1-P\bigg{(}\bigcap^J_{j=1} B_{S'_{i'},j} \bigg{)} )$ for all $S'_{i'} \in \mathbf{S'}$ with the desired FWER level of control, when using the binding double triangular stopping boundaries found under the global null.}
\label{fig:comparisonplot}
\end{figure}

\end{document}